\documentclass[sigconf]{acmart}

\RequirePackage{OpenProc-Mods}

\AddToHook{cmd/maketitle/before}{%

\def\EDBTISSN{2367-2005}
\def\EDBTISBN{978-3-89318-106-3}
\setcopyright{none}
\copyrightyear{\copyright\ 2026 Copyright held by the owner/author(s).
  Published on \url{OpenProceedings.org} under ISBN \EDBTISBN, series ISSN \EDBTISSN.
  Distribution of this paper is permitted under the terms of the
  Creative Commons license CC-by-nc-nd 4.0}
\acmYear{2026}
\acmDOI{}
\acmISBN{}

\acmConference[EDBT '27]{Extending Database Technology}{6-9 April 2027}{Lille (France)}

\settopmatter{printacmref=false, printccs=false, printfolios=false}

\newsavebox{\ximagebox}
\newlength{\ximageheight}
\newsavebox{\xglyphbox}
\newlength{\xglyphheight}
\newcommand{\xbox}[1]%
  {\savebox{\ximagebox}{#1}%
  \settoheight{\ximageheight}{\usebox{\ximagebox}}%
  \savebox{\xglyphbox}{\color{white}\char32}%
  \settoheight{\xglyphheight}{\usebox{\xglyphbox}}%
  \raisebox{\ximageheight}[0pt][0pt]{\raisebox{-\xglyphheight}[0pt][0pt]{%
    \makebox[0pt][l]{\usebox{\xglyphbox}}}}%
    \usebox{\ximagebox}%
    \raisebox{0pt}[0pt][0pt]{\makebox[0pt][r]{\usebox{\xglyphbox}}}}

\newsavebox{\LogoBox}
\sbox{\LogoBox}{\includegraphics[height=1cm]{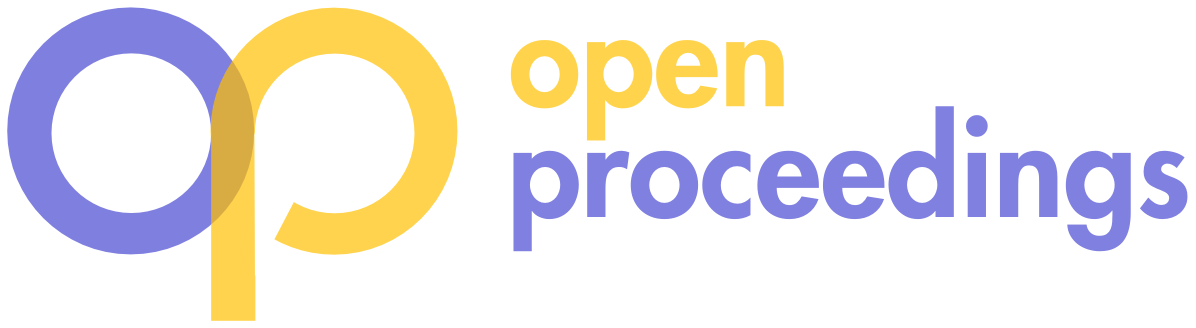}}

\fancypagestyle{firstpagestyle}{%
  \fancyhf{}%
  \fancyfoot{}%
  \fancyhead[L,C]{}
  \fancyhead[R]{\href{https://OpenProceedings.org/}{\raisebox{15pt}[0pt][0pt]{\xbox{\usebox{\LogoBox}}}}}
  }
  
\hypersetup{%
  pdfcopyright={CC-by-nc-nd},
  pdfpublisher={OpenProceedings.org -- University of Konstanz, Germany},
  pdfcreator={LaTeX with acmart, hyperref, and EDBT modifications},
  pdfvolumenum={30},
  pdfissuenum={1},
  pdfisbn={\EDBTISBN},
  pdfissn={\EDBTISSN},
  pdfdoi={10.48786/edbt.2027.xx}
}
}

\usepackage{amsmath}
\usepackage{booktabs}
\usepackage{xcolor,colortbl}
\definecolor{moemarket}{RGB}{225,245,225} 
\usepackage{multirow}
\usepackage{subcaption}
\usepackage{graphicx}
\usepackage{placeins}
\usepackage{float}

\def\BibTeX{{\rm B\kern-.05em{\sc i\kern-.025em b}\kern-.08em
    T\kern-.1667em\lower.7ex\hbox{E}\kern-.125emX}}

\begin{document}


\OPtitle{Cost-Aware Mixture-of-Experts Coordination for \\ Model Markets}




\OPauthor[0009-0005-8566-6585]{Yizhou Ma}
\affiliation{%
  \institution{University of Southampton}
  \city{Southampton}
  \country{UK}}
\email{ym7u22@soton.ac.uk}

\OPauthor[0009-0002-3937-0124]{Wenbo Wu}
\affiliation{%
  \institution{University of Southampton}
  \city{Southampton}
  \country{UK}}
\email{wenbo.wu@soton.ac.uk}

\OPauthor[0000-0002-5517-0301]{Xikun Jiang}
\affiliation{%
  \institution{Aalborg University}
  \city{Copenhagen}
  \country{Denmark}}
\email{xji@cs.aau.dk}

\OPauthor[0009-0007-8707-9370]{Zhuoqin Yang}
\affiliation{%
  \institution{University of Nottingham}
  \city{Ningbo}
  \country{China}}
\email{scxzy5@nottingham.edu.cn}

\OPauthor[0000-0001-6993-0001]{Luis-Daniel Ibáñez}
\affiliation{%
  \institution{University of Southampton}
  \city{Southampton}
  \country{UK}}
\email{l.d.ibanez@soton.ac.uk}



\begin{OPabstract}
Existing model marketplaces typically trade and select individual models as indivisible units, limiting their ability to exploit complementarities among heterogeneous experts.
This paper proposes an MoE-based model market framework that lifts Mixture-of-Experts from a model-level learning architecture to a market-level coordination mechanism.
In this framework, brokers use gating networks to coordinate multiple heterogeneous experts and deliver a composite model service. 
We formalize the market participants, service workflow, expert cost structure, and a welfare objective that combines predictive utility with heterogeneous execution costs. We then derive a cost-aware gating mechanism and market-aware training objective, and introduce a cost-adjusted revenue allocation rule that distributes residual revenue according to realized expert participation and execution cost. 
We also establish basic theoretical properties of the allocation rule, including budget balance, participation monotonicity, and cost sensitivity. 
Experiments over five random seeds on fifteen tabular and image benchmarks use independently trained and frozen neural and tree-based experts together with latency-derived execution costs. MoE Market achieves the highest mean welfare on all fifteen datasets and a lower mean expected cost than Standard MoE in every case, while maintaining competitive predictive performance. The allocation experiments further demonstrate systematic sensitivity to expert participation and cost, together with substantially lower computational overhead than exact Shapley allocation. These results suggest that MoE can serve as a market-level coordination principle for collaborative, cost-aware, and economically grounded model marketplaces.
\end{OPabstract}

 \keywords{Data Market, Model Market, Data Economy, Mixture of Experts}

\maketitle

\section{Introduction}

Model marketplaces extend data-market infrastructures by allowing buyers to acquire predictive services from model providers rather than raw datasets alone. Early data marketplaces \cite{zhang2023survey} focused on the exchange of raw datasets, whereas recent markets increasingly capture value through models trained on, or supported by, such data. In this setting, model marketplaces connect data owners, model providers, and model buyers: buyers request model services, model providers supply trained models, and data owners provide the data needed to train or improve those models.

\begin{figure}
    \centering
    \includegraphics[width=\linewidth]{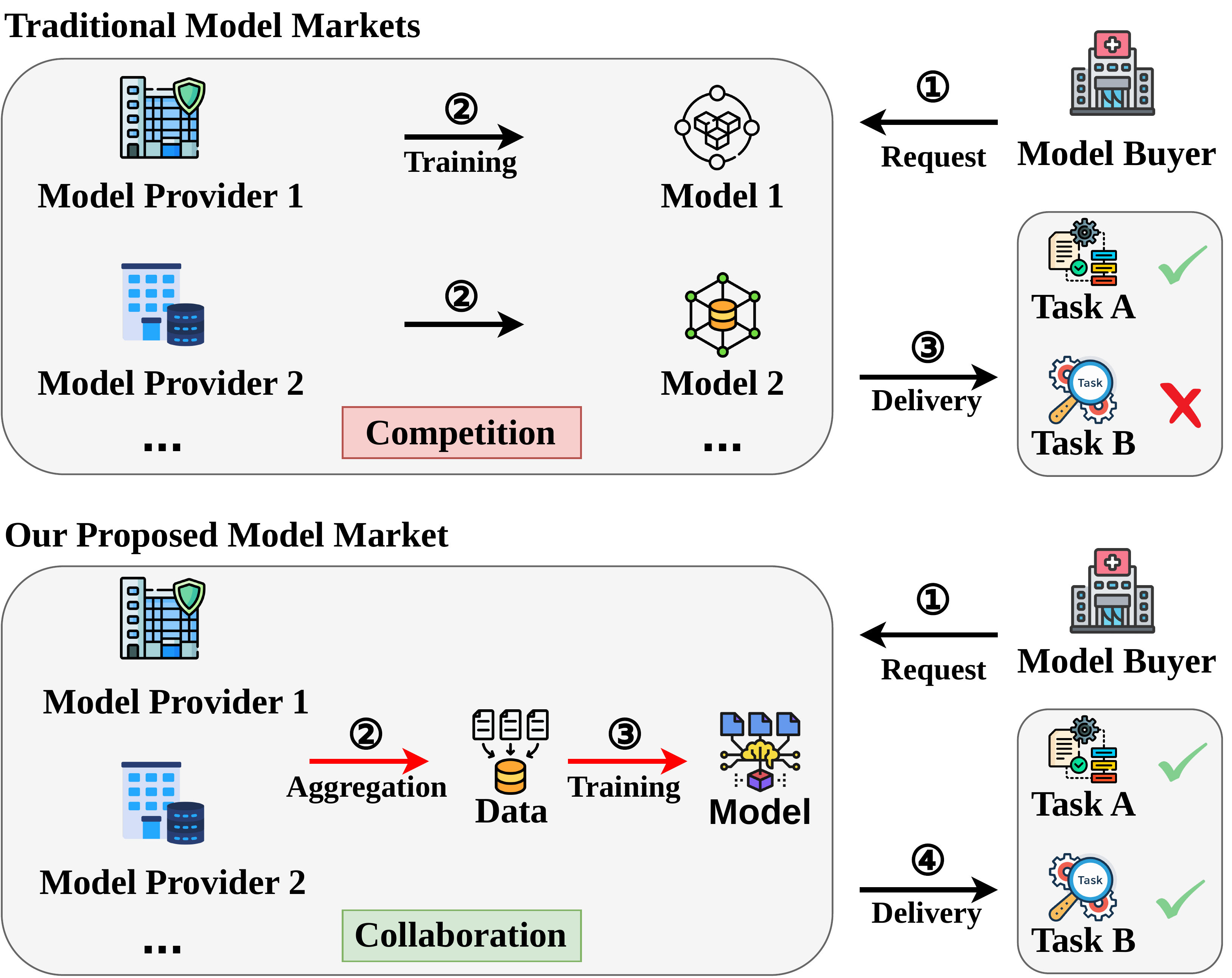}
    \caption{Competitive and cooperative model markets.
In a competitive model market, providers compete and the buyer selects
a single model. In the proposed cooperative market, the broker coordinates
multiple heterogeneous experts to deliver a composite model service.}
    \label{fig:competitive_collaborative_market}
\end{figure}

Existing model marketplaces \cite{chen2019demonstration,agarwal2019marketplace,chen2019towards,azcoitia2022survey,jiang2022pricing,jiang2023divine} typically allow buyers to buy a model from a single model provider. Multiple providers compete to satisfy buyer-specified tasks, for buyers to make their choice statically, by ranking, or versioning. This design is appropriate for competitive model trading, but it provides limited support for scenarios where the best model for the buyers' task is the result of combining the complementary capabilities of multiple providers.
For example, the developer of a fraud-detection service may benefit from coordinating an expert model specialized in user behavioral patterns with another expert model specialized in transaction anomalies. Similarly, medical AI services may benefit from coordinating models trained for different imaging modalities or clinical signals. From the providers' perspective, the lack of  cooperation and incentive mechanisms limits their ability to jointly create higher-value composite model services. These observations motivate the development of cooperative market mechanisms to incentivise and coordinate heterogeneous providers to deliver composite models. Fig.~\ref{fig:competitive_collaborative_market} illustrates the comparison between competitive and cooperative model market models.

The Mixture-of-Experts (MoE) architecture \cite{yuksel2012twenty} provides a natural mechanism for coordinating heterogeneous capabilities across tasks or input regions. A gating network dynamically assigns weights to specialized experts, allowing the system to exploit task-specific differences through selective activation and produce composite predictions that are more accurate or efficient than relying on a single expert. Most existing MoE research treats experts and gates as components of a single learning system, with objectives defined primarily at the model level. In this paper, we propose the use of MoE as a market-level coordination mechanism among heterogeneous model providers. In this way, expert routing determines not only which experts are most useful for predicting the output for a given input, but also which experts are economically worthwhile to invoke under execution costs and compensation considerations, because each expert may differ in capability, execution cost, and expected compensation. Strong foundation-model services do not eliminate this coordination problem, they can be incorporated as high-capacity experts alongside smaller specialized models.

Our previous work~\cite{ma2025mixture} introduced an initial conceptual design for applying MoE to cooperative model marketplaces. It proposed a MoE-based model market mechanism that lifts MoE from an internal component of a single model to a market-level coordination framework. However, the previous framework remains largely conceptual. In particular, it does not formalize the market objective that a broker should optimize, such as how to trade off the predictive utility of a composite service against the execution costs of the participating experts. It also does not provide a trainable cost-aware coordination mechanism or empirical evidence on the resulting allocation behavior. To address these gaps, this paper advances the MoE-based model market framework from a conceptual design to a formal and empirically grounded setting. In particular, we study the following research questions:

\begin{itemize}
    \item How can market welfare be defined so that expert coordination accounts for both model utility and heterogeneous execution costs?

    \item How can a cost-aware gating mechanism be designed and trained to coordinate heterogeneous experts while balancing model quality and service cost?

    \item How should revenue be allocated among contributing experts according to both their contribution and heterogeneous operational costs?
\end{itemize}

Our main contributions are summarized as follows:

\begin{itemize}
    \item  We formalize an \textbf{MoE-based model market} as a market-level coordination framework involving buyers, brokers, data providers, a gating network, and heterogeneous experts. Based on this formulation, we define the market workflow, expert cost model, utility function, and welfare objective for collaborative composite model service delivery.

    \item  We design a \textbf{cost-aware gating mechanism} together with a corresponding \textbf{market-aware training objective} to coordinate heterogeneous experts under explicit model quality--cost trade-offs.

    \item  We introduce a \textbf{cost-adjusted revenue allocation mechanism} for distributing the residual revenue pool among experts according to both their realized contribution and execution costs.

    \item  We evaluate the proposed framework on tabular and image benchmarks, showing that cost-aware coordination improves welfare while maintaining competitive predictive performance. We further compare the allocation rule with participation-only and exact Shapley allocation, demonstrating cost-sensitive payments and substantially lower allocation runtime.
\end{itemize}

\section{Related Work}

\textbf{Data Marketplaces.}
Data marketplaces enable the commoditization and exchange of data among data owners, brokers, and consumers, while balancing economic value with privacy considerations \cite{zhang2023survey}. 
They are commonly organized as sell-side markets, where data owners offer datasets to potential buyers; buy-side markets, where buyers post data demands; or two-sided markets, where a broker or platform matches data supply with data demand.

\noindent
\textbf{Model Marketplaces.}
Motivated by the observation that much of the value of data lies in its use to train machine learning models, \emph{model marketplaces} have emerged to directly trade machine learning models.
Chen \textit{et al.} \cite{chen2019towards} introduced the Model-Based Pricing (MBP) framework, in which a broker prices model instances according to predictive quality while preventing arbitrage. Liu \textit{et al.} \cite{liu2021dealer} proposed a broker-centric model marketplace that incorporates differential privacy and Shapley-value-based compensation for data contributors.
With the rise of federated learning, several studies further explored decentralized model marketplaces.
Sun \textit{et al.} \cite{sun2022profit} designed a DPFL-based marketplace that explicitly accounts for data owners' privacy costs, while Pan \textit{et al.} \cite{pan2023privacy} addressed trainer selection under non-IID data and heterogeneous task requirements.

Despite these advances, existing model marketplaces largely treat models as indivisible commodities and focus on pricing, privacy, and contribution from datasets.
They do not consider scenarios in which multiple heterogeneous models are dynamically coordinated to jointly provide a composite service.

\noindent
\textbf{Mixture of Experts.}
The Mixture-of-Experts (MoE) architecture employs multiple specialized models together with a gating network to dynamically select and combine expert predictions, enabling scalable and efficient learning across diverse domains \cite{yuksel2012twenty,mu2025comprehensive}.
MoE has been successfully applied in continual learning \cite{hihn2021mixture}, meta-learning \cite{dai2021generalizable}, multi-task learning \cite{reuss2024efficient}, reinforcement learning \cite{wei2025deep}, federated learning \cite{su2023one}, and, more recently, large language models \cite{song2024promoe,yang2025faster,sun2026multi}.

However, in conventional MoE systems, experts and gating networks are co-trained components within a single learning pipeline. To the best of our knowledge, the MoE paradigm has not been proposed as a market-level coordination mechanism among heterogeneous model providers. This paper extends MoE from an internal learning architecture to a market-level framework for coordinating heterogeneous experts, where expert participation is governed not only by model utility but also by operational cost.

\section{Problem Formulation}
In this section, we formally define the proposed \textbf{MoE-based model markets} (as depicted in Fig.~\ref{fig:proposed-model}). We formalize the participants, decision variables, and economic costs of the composite service in the proposed MoE-based model market. Table~\ref{tab:notations} summarizes the main notation used throughout the formulation.

\begin{table}[t]
  \caption{Key Notations in the MoE Market Model}
  \label{tab:notations}
  \centering
  \begin{tabular}{ccp{4.2cm}}
    \toprule
    Symbol & Type & Description\\
    \midrule
    $\mathcal{M}$ & tuple & MoE Market $(\mathcal{B},\mathcal{N},\mathcal{P},\mathcal{G},\mathcal{E})$\\
    $\mathcal{B}$ & set & Set of buyers\\
    $r=(\tau,q,\pi)$ & tuple & Buyer service request\\
    $\mathcal{N}=\{N\}$ & set & Broker set\\
    $\mathcal{P}$ & set & Set of data providers\\
    $\mathcal{E}$ & set & Set of experts\\
    $f_i(x;\theta_i)$ & function & Predictive model of expert $E_i$\\
    $g(x,r;\phi)$ & function & Gating network routing inputs to experts\\
    $w_i(x,r)$ & variable & Gating weight of expert $E_i$\\
    $\mathcal{D}_r$ & distribution & Data distribution induced by request $r$\\
    $c_i(x,r)$ & function & Participation cost of expert $E_i$\\
    $\ell_\tau(\cdot,\cdot)$ & function & Task-specific prediction loss\\
    $U(r)$ & variable & Utility of delivered composite model\\
    $\mathrm{Welfare}(r)$ & variable & Market welfare objective\\
    $\pi_r$ & variable & Service price paid by the buyer\\
    $\kappa_r$ & variable & Broker coordination fee\\
    $R_r$ & variable & Residual revenue pool for experts\\
    $a_i(r)$ & variable & Revenue allocated to expert $E_i$\\
    \bottomrule
  \end{tabular}
\end{table}

\subsection{Market Participants}
\label{market_participants}

\begin{definition}[MoE Market]
The proposed MoE Market is defined as a tuple
\begin{equation}
\mathcal{M} = (\mathcal{B}, \mathcal{N}, \mathcal{P}, \mathcal{G}, \mathcal{E}),
\end{equation}
where:
\begin{itemize}
\item $\mathcal{B}$ denotes the set of buyers;
\item $\mathcal{N}=\{N\}$ denotes the set of brokers;
\item $\mathcal{P}$ denotes the set of data providers;
\item $\mathcal{G}$ denotes the gating network operated by the broker;
\item $\mathcal{E}$ denotes the set of heterogeneous experts.
\end{itemize}
\end{definition}
\begin{definition}[Buyers and Requests]
Let the set of buyers be
\begin{equation}
\mathcal{B}=\{B_1,B_2,\dots,B_{|\mathcal{B}|}\}.
\end{equation}
Each buyer $B_k\in\mathcal{B}$ is characterized by a model service request
\begin{equation}
r_k=(\tau_k,q_k,\pi_k),
\end{equation}
where $\tau_k$ specifies the task type, such as classification or regression. $q_k$ denotes a quality requirement (e.g., target accuracy or error tolerance), and $\pi_k$ represents the buyer’s budget constraint, namely the maximum amount the buyer is willing to pay for the requested service.
The request $r_k$ constitutes the buyer’s demand in the MoE Market. In particular, $q_k$ and $\pi_k$ jointly characterize the feasibility of the request. A service is acceptable only if the delivered composite model satisfies the required quality level and the negotiated service price does not exceed the buyer's budget. Otherwise, the request must be rejected or renegotiated.
\end{definition}

\begin{figure}[t]
    \centering
    \includegraphics[width=\linewidth]{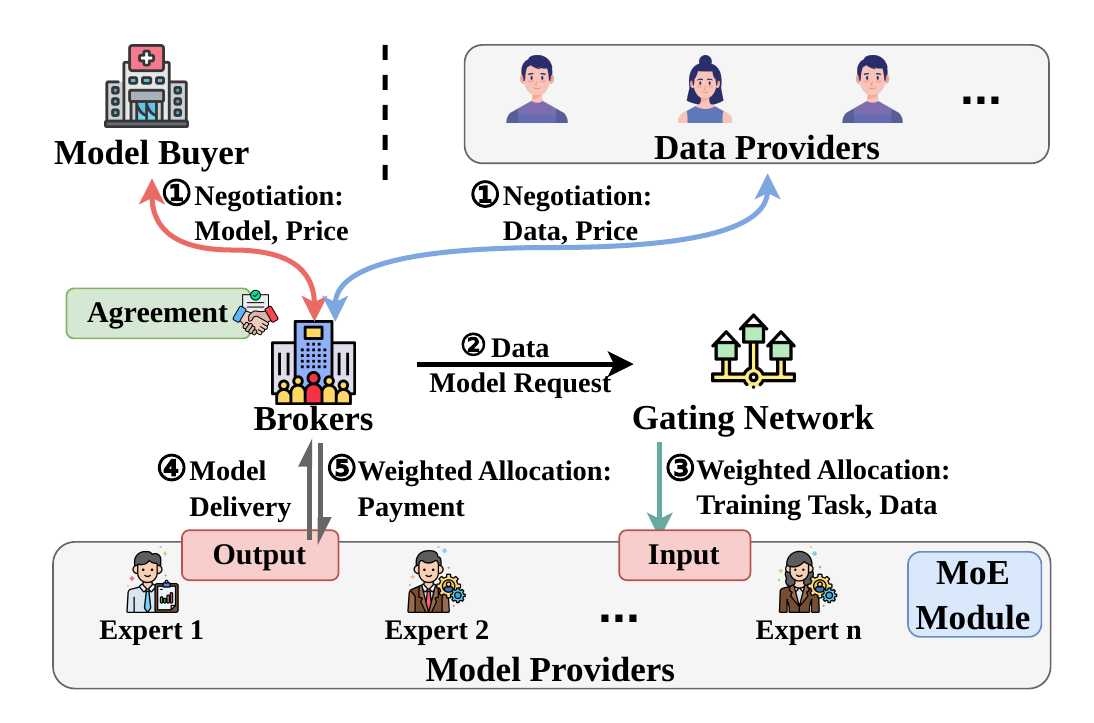}
    \caption{MoE Model Market}
    \label{fig:proposed-model}
\end{figure}

\begin{definition}[Broker]
We denote the broker by $\mathcal{N}=\{N\}$.

The broker is the market operator that receives buyer requests, initiates the transaction outcome, acquires task-relevant datasets, coordinates experts participation through the gating network, and delivers the resulting composite model service. In this paper, we model the broker as a single decision-making agent without loss of generality.

Formally, given a buyer request $r_k=(\tau_k,q_k,\pi_k)$, the broker produces a transaction outcome
\begin{equation}
N:\ r_k \mapsto (\hat{\pi}_k,\hat{f}_k,\mathbf{a}_k),
\end{equation}
where $\hat{\pi}_k$ denotes the negotiated service price, $\hat{f}_k$ denotes the delivered composite model/service, and
\begin{equation}
\mathbf{a}_k=(a_{k,1},\dots,a_{k,|\mathcal{E}|})
\end{equation}
is the payment allocation vector over experts, with $a_{k,i}$ denoting the payment assigned to expert $E_i$.

Here, $\pi_k$ is the buyer's budget limit specified in the request, whereas $\hat{\pi}_k$ is the realized transaction price after negotiation. In a feasible transaction, the negotiated price should satisfy $\hat{\pi}_k \leq \pi_k$. 

To construct $\hat{f}_k$, the broker first acquires task-relevant datasets from a subset of data providers $\mathcal{P}_k\subseteq\mathcal{P}$:
\begin{equation}
N:\ (\tau_k,q_k)\mapsto \{D_{k,m}\}_{m\in\mathcal{P}_k},
\end{equation}
where $D_{k,m}$ denotes the dataset obtained from provider $P_m$ for request $k$. The collected data and request context are then forwarded to the gating network for expert coordination.

\end{definition}

\begin{definition}[Data Providers]
Let the set of data providers be
\begin{equation}
\mathcal{P}=\{P_1,P_2,\dots,P_{|\mathcal{P}|}\}.
\end{equation}
Each $P_m\in\mathcal{P}$ denotes a data-supplying entity, which may represent an individual owner, an organization, or another abstract source of task-relevant data.

For a buyer request $r_k$, each data provider $P_m\in\mathcal{P}$ is characterized by a pricing function that maps the broker's data request to a price quote:
\begin{equation}
P_m:\ \mathrm{Req}(r_k)\mapsto \rho_{k,m},
\end{equation}
where $\rho_{k,m}$ denotes the quoted price for supplying task-relevant data for request $k$. In the current formulation, if the broker procures data from provider $P_m$, the provider is compensated according to this quoted price. That is, data providers are paid through direct procurement, whereas the revenue allocation mechanism studied later in this paper focuses on the participating experts.

Given a selected subset of providers $\mathcal{P}_k\subseteq\mathcal{P}$, the broker acquires datasets
\begin{equation*}
D_{k,m}\sim P_m,\quad m\in\mathcal{P}_k,
\end{equation*}
and constructs the aggregated training dataset
\begin{equation*}
\mathcal{D}_k=\bigcup_{m\in\mathcal{P}_k} D_{k,m}.
\end{equation*}

Here, $\mathcal{D}_k$ represents the output of an upstream data discovery and integration stage. We assume that the selected datasets have been assessed for task relevance and transformed into a compatible task schema before aggregation. The specific mechanisms for data-provider discovery, relevance assessment, and schema integration are modular to our framework.

\end{definition}


\begin{definition}[Gating Network]
The gating network is maintained by the broker and serves as a routing mechanism for coordinating expert participation. We denote
\begin{equation}
\mathcal{G}=\{g(\cdot;\phi)\},
\end{equation}
where $\phi$ represents the gating parameters.

Formally, for a buyer request $r_k$ and an input sample $x\in\mathcal{D}_k$ drawn from the aggregated dataset associated with that request, the gating network implements a stochastic routing operator:
\begin{equation}
g:\ (x,r_k)\mapsto w_k(x)\in\Delta^{|\mathcal{E}|},
\end{equation}
where
\begin{equation*}
w_k(x)=(w_{k,1}(x),\dots,w_{k,|\mathcal{E}|}(x)),
\quad
\sum_i w_{k,i}(x)=1,\quad w_{k,i}(x)\ge 0.
\end{equation*}

The vector $w_k(x)$ specifies the participation weights of experts for input $x$, thereby determining which experts are activated and their relative contributions to the composite service. In dense routing, all experts may receive nonzero weights, in sparse routing, only experts with positive effective weights are activated. The routing weights $w_k(x)$ are generated dynamically for each input--request pair rather than assigned permanently during pre-training, training learns the gating parameters $\phi$.
\end{definition}


\begin{definition}[Experts]
Let the set of heterogeneous experts be
\begin{equation*}
\mathcal{E}=\{E_1,E_2,\dots,E_{|\mathcal{E}|}\}.
\end{equation*}
Each expert $E_i\in\mathcal{E}$ is associated with a predictive model $f_i(x;\theta_i)$, and a non-negative participation cost function, which will be included in section \ref{cost}. Each expert is instantiated independently with its own model parameters, architectural configuration, training process, and execution cost. An expert may be supplied as a pretrained service or independently fitted on request-specific training data. Once the expert is included in the market for a request, its parameters $\theta_i$ remain fixed while the broker trains the gating network. Expert heterogeneity therefore arises from differences in architecture, learned predictive behavior, and execution cost; complementarity is not assumed in advance but is identified through the learned routing policy.

Given a request $r_k$ and an input sample $x$, the set of activated experts is determined by the gating weights:
\begin{equation}
\mathcal{E}_k(x)=\{E_i\in\mathcal{E}\mid w_{k,i}(x)>0\}.
\end{equation}

The composite output is obtained by aggregating expert predictions. For classification, we adopt a probability-level mixture:
\begin{equation}
p_k(y\mid x)=\sum_{i=1}^{|\mathcal{E}|} w_{k,i}(x)\,p_i(y\mid x).
\end{equation}

Only activated experts participate in execution and are eligible for compensation. If $w_{k,i}(x)=0$, expert $E_i$ neither contributes to the composite service nor receives payment for request $r_k$.
\end{definition}

\subsection{MoE Market Workflow}

\begin{definition}[Market Workflow]
For each buyer request $r_k=(\tau_k,q_k,\pi_k)$ submitted by buyer $B_k$, the MoE Market operates according to the following steps:

\textbf{Step 1: Request Submission.}
The buyer submits a model service request $r_k$ to the broker, specifying the task type, quality requirement, and budget constraint.

\textbf{Step 2: Data Acquisition.}
Given $r_k$, the broker solicits task-relevant datasets from a subset of data providers $\mathcal{P}_k\subseteq\mathcal{P}$ and constructs the aggregated dataset
\begin{equation}
\mathcal{D}_k=\bigcup_{m\in\mathcal{P}_k} D_{k,m}.
\end{equation}

\textbf{Step 3: Expert Coordination.}
For each input instance $x\in\mathcal{D}_k$, where $x$ denotes one request-specific data instance such as a tabular feature vector or an image input, the broker invokes the gating network to produce an expert participation vector
\begin{equation}
w_k(x)=g(x,r_k;\phi)\in\Delta^{|\mathcal{E}|},
\end{equation}
which determines the activated experts and their relative contributions.

\textbf{Step 4: Composite Service Generation.}
The activated experts jointly produce a composite model output through weighted aggregation. For classification tasks, this is instantiated as
\begin{equation}
p_k(y\mid x)=\sum_{i=1}^{|\mathcal{E}|} w_{k,i}(x)\,p_i(y\mid x).
\end{equation}

For regression tasks, the composite prediction can be written as
\begin{equation}
\hat{y}_k(x)=\sum_{i=1}^{|\mathcal{E}|} w_{k,i}(x)\,\hat{y}_i(x).
\end{equation}

The resulting composite predictor constitutes the delivered service $\hat{f}_k$. Here, $p_k(y\mid x)$ denotes the predictive probability of the composite model for classification, and $\hat{y}_k(x)$ denotes the aggregated prediction for regression.

\textbf{Step 5: Service Delivery and Payment.}
If the broker can successfully deliver a feasible composite service satisfying the buyer's quality requirement and budget constraint, the buyer pays the negotiated service price $\hat{\pi}_k$ to the broker. Let $C(r_k)$ denote the expected expert execution cost associated with request $r_k$. The broker retains a coordination fee $\kappa_k$, and the residual surplus
\begin{equation}
R_k=\hat{\pi}_k-\kappa_k-C(r_k)
\end{equation}
forms the allocatable revenue pool. Here, feasibility requires $\hat{\pi}_k\leq \pi_k$ and a non-negative residual pool, i.e., $R_k\geq 0$. If no such feasible transaction can be achieved, the request must be rejected or renegotiated.

\textbf{Step 6: Revenue Allocation.}
The broker allocates the residual revenue pool $R_k$ among participating experts according to the revenue allocation mechanism defined in Section~\ref{revenue}. Only experts activated under the gating outcome are eligible to receive a positive allocation for the corresponding request.
\end{definition}

\subsection{Cost Model}
\label{cost}

\begin{definition}[Expert Cost]

Each expert $E_i\in\mathcal{E}$ is associated with a participation cost function:
\begin{equation}
c_i(x,r):\mathcal{X}\times\mathcal{R}\rightarrow \mathbb{R}_{\ge 0},
\end{equation}
where $c_i(x,r)$ denotes the economic or operational cost incurred when expert $E_i$ is invoked on input $x$ under request $r$.
This cost function captures heterogeneous factors such as computation, latency, resource usage, opportunity cost, or internal service charges. Task-dependent preprocessing costs can also be incorporated. Preprocessing overhead specific to expert $E_i$ can be included in $c_i(x,r)$, whereas preprocessing shared by all experts can be treated as broker-side operational overhead covered by the coordination fee $\kappa_r$.
\end{definition}

\begin{definition}[Instance-level Service Cost]
Given a request $r$ and an input $x$, the service cost incurred for processing $x$ is defined as
\begin{equation}
\mathrm{Cost}(x,r)=\sum_{i=1}^{|\mathcal{E}|} w_i(x,r)\,c_i(x,r),
\end{equation}
where $w_i(x,r)$ denotes the gating weight assigned to expert $E_i$.
The gating weight is interpreted as the fraction of a divisible inference workload, such as query volume, token usage, or computational effort, assigned to an expert. Consequently, $w_i(x,r)c_i(x,r)$ represents the variable usage-dependent cost attributable to that expert. The present formulation abstracts from fixed invocation or API start-up fees. Such fees could be incorporated through an additional term of the form $b_i\mathbb{I}[w_i(x,r)>0]$, but are left for future work.
\end{definition}

\begin{definition}[Expected Service Cost]
Let $\mathcal{D}_r$ denote the data distribution induced by request $r$.
The expected service cost under request $r$ is defined as
\begin{equation}
\label{ExpectedServiceCost}
\mathbb{E}[\mathrm{Cost}\mid r]
=
\mathbb{E}_{x\sim\mathcal{D}_r}
\left[
\sum_{i=1}^{|\mathcal{E}|} w_i(x,r)\,c_i(x,r)
\right].
\end{equation}
\end{definition}

\begin{definition}[Budget Constraint]
When a buyer request specifies a budget limit $\pi_r$, the broker is required to satisfy the following constraint:
\begin{equation*}
\mathbb{E}[\mathrm{Cost}\mid r]\le \pi_r.
\end{equation*}
\end{definition}

\newtheorem{assumption}{Assumption}
\begin{assumption}[Bounded Cost and Measurability]
For every expert $E_i\in\mathcal{E}$ and request $r\in\mathcal{R}$, the participation cost function $c_i(x,r)$ is measurable in $x$ and uniformly bounded, i.e., there exists a constant $C_{\max}>0$ such that
\begin{equation}
0 \le c_i(x,r) \le C_{\max},
\quad \forall x\in\mathcal{X},\ \forall r\in\mathcal{R},\ \forall E_i\in\mathcal{E}.
\end{equation}
Moreover, the gating weights satisfy
\begin{equation}
0 \le w_i(x,r) \le 1,
\quad \forall x\in\mathcal{X},\ \forall r\in\mathcal{R},\ \forall E_i\in\mathcal{E}.
\end{equation}
and
\begin{equation}
\sum_{i=1}^{|\mathcal{E}|} w_i(x,r)=1,
\quad \forall x\in\mathcal{X},\ \forall r\in\mathcal{R}.
\end{equation}
\end{assumption}

\subsection{Utility and Welfare Objective}
\label{utility_and_welfare_objective}

Under a given buyer request $r$, the broker trains the gating network to balance predictive performance and expert participation cost.
We formalize this trade-off via a utility-based welfare objective. Throughout this formulation, $\mathcal{D}_r$ denotes the target deployment distribution associated with buyer request $r$. We assume that the request-specific aggregated data used for training and validation are representative samples from this distribution. Consequently, the welfare objective evaluates predictive utility and service cost under the distribution on which the delivered service is expected to operate. The objective is not intended to model all monetary transfers in the market; rather, it provides a trainable criterion that combines the predictive value of the delivered service with the operational cost of expert participation.

\begin{definition}[Predictive Loss]
Let $\ell_\tau(\cdot,\cdot)$ denote the task-specific loss induced by request $r$.
Given a labeled sample $(x,y)$ and composite output $\hat{y}(x)$, the expected predictive loss is defined as
\begin{equation}
\mathcal{L}_{\mathrm{pred}}(r)=
\mathbb{E}_{(x,y)\sim\mathcal{D}_r}
\big[\ell_\tau(y,\hat{y}(x))\big].
\end{equation}
For classification tasks, $\ell_\tau$ corresponds to cross entropy loss, while for regression tasks it is instantiated as mean squared error.
\end{definition}

\begin{definition}[Utility Function]
Given a request $r$, we define the utility of the delivered composite model service as an exponential transformation of the task loss:
\begin{equation}
U(r)
=
\alpha\cdot
\mathbb{E}_{(x,y)\sim\mathcal{D}_r}
\left[
\exp\!\left(-\frac{\ell_\tau(y,\hat{y}(x)}{\ell_{\mathrm{ref}}}\right)
\right],
\end{equation}
where $\ell_\tau(\cdot,\cdot)$ denotes the task-specific loss induced by request $r$, $\hat{y}(x)$ denotes the composite model output, $\ell_{\mathrm{ref}}>0$ is a reference loss for normalization, and $\alpha>0$ controls utility scaling.
\end{definition}

\begin{definition}[Market Welfare]
Given request $r$, we define the market welfare as the utility penalized by the expected service cost Eq.\eqref{ExpectedServiceCost}:
\begin{equation}
\label{welfare_r}
\mathrm{Welfare}(r)
=
U(r)
-
\beta\cdot
\mathbb{E}_{x\sim\mathcal{D}_r}
\Big[
\sum_{i=1}^{|\mathcal{E}|} w_i(x,r)\,c_i(x,r)
\Big],
\end{equation}
where $\beta\ge0$ controls the trade-off between predictive quality and cost sensitivity.
\end{definition}

\begin{definition}[Learning Objective]
Given independently trained experts with fixed parameters $\{\theta_i\}_{i=1}^{|\mathcal{E}|}$, the broker learns the gating parameters $\phi$ by maximizing the welfare under request $r$ in Eq.~\eqref{welfare_r}:
\begin{equation}
\max_{\phi}\ \mathrm{Welfare}(r).
\end{equation}

Equivalently, the gating network is trained by minimizing the following market-aware loss:
\begin{equation}
\label{eq:L_market}
\begin{aligned}
\min_{\phi}\ \mathcal{L}_{\mathrm{market}}
=
&-
\alpha \cdot
\mathbb{E}_{(x,y)\sim\mathcal{D}_r}
\left[
\exp\!\left(
-\frac{\ell_\tau(y,\hat{y}(x))}
{\ell_{\mathrm{ref}}}
\right)
\right]\\
&+\beta\cdot
\mathbb{E}_{x\sim\mathcal{D}_r}
\left[
\sum_{i=1}^{|\mathcal{E}|}
w_i(x,r)c_i(x,r)
\right],
\end{aligned}
\end{equation}
where $\hat{y}(x)$ is obtained by combining the fixed expert predictions according to the learned gating weights, and $\ell_{\mathrm{ref}}>0$ is a reference loss used for normalization.

The first term rewards routing policies that produce high-quality composite predictions, while the second penalizes costly expert usage. Thus, minimizing $\mathcal{L}_{\mathrm{market}}$ learns a cost-aware coordination policy over independently trained experts rather than modifying the experts themselves. It is the optimization form induced directly by the market-welfare objective.
\end{definition}

\begin{definition}[Revenue Pool]
For each buyer request $r$, the buyer pays a service price $\hat{\pi}_r$ to the broker upon successful delivery of the composite model service.
The broker retains a coordination fee $\kappa_r\ge 0$ to cover platform operation and coordination overhead.

Let
\begin{equation}
C_i(r)=\mathbb{E}_{x\sim\mathcal{D}_r}\big[w_i(x,r)c_i(x,r)\big]
\end{equation}
denote the expected participation cost of expert $E_i$ under request $r$.
The total expected expert execution cost is therefore
\begin{equation}
C(r)=\sum_{i=1}^{|\mathcal{E}|} C_i(r)
=
\mathbb{E}_{x\sim\mathcal{D}_r}
\left[
\sum_{i=1}^{|\mathcal{E}|} w_i(x,r)c_i(x,r)
\right].
\end{equation}

The remaining amount
\begin{equation}
R_r=\hat{\pi}_r-\kappa_r-C(r)
\end{equation}
is defined as the residual surplus available for allocation among participating experts.
In this work, we assume $R_r\ge 0$.
\end{definition}

Under this formulation, the broker acts as a platform operator that retains a coordination fee, while the residual revenue pool $R_r$ is redistributed among experts according to their service contributions.
Market welfare captures the net system value generated by expert coordination, whereas $R_r$ provides the monetary basis for compensating participating experts. The corresponding revenue allocation mechanism is introduced in Section~\ref{revenue}.

\section{Revenue Allocation}
\label{revenue}
Upon completion of a composite model service for request $r$, the broker allocates the residual revenue pool $R_r$ among participating experts according to their contributions to the delivered service.
This section formalizes expert payoffs and presents a cost-aware allocation rule aligned with the MoE coordination mechanism. The allocation rule applies to the residual surplus after the broker coordination fee and expected expert execution costs have been accounted for.

\begin{definition}[Cost-Adjusted Revenue Allocation]
For a given request $r$, let $R_r$ denote the residual revenue pool after broker coordination fees and expected expert execution costs. We define the cost-adjusted contribution score of expert $E_i$ as
\begin{equation}
s_i(r)
=
\mathbb{E}_{x\sim\mathcal{D}_r}
\Big[
w_i(x,r)\cdot \exp\!\big(-\lambda\, c_i(x,r)\big)
\Big],
\end{equation}
where $w_i(x,r)$ captures the realized participation of expert $E_i$ under the gating outcome, and the exponential factor discounts this contribution according to the expert's execution cost. The parameter $\lambda>0$ controls the strength of this discount: larger $\lambda$ imposes a stronger penalty on costly experts, whereas smaller $\lambda$ makes the allocation closer to participation-proportional sharing. The exponential form is also convenient for analysis, as it yields clear limiting behavior and monotone cost sensitivity, which are formalized in the propositions below.

The revenue allocated to expert $E_i$ under request $r$ is then defined as
\begin{equation}
a_i(r)
=
R_r\cdot
\frac{s_i(r)}{\sum_{j=1}^{|\mathcal{E}|}s_j(r)}.
\end{equation}
Thus, the proposed rule distributes the residual revenue proportionally to cost-adjusted contribution scores rather than raw participation alone. Only experts with $s_i(r)>0$ receive positive revenue.
\end{definition}

This design turns the allocation objective into an operational payment rule: experts are rewarded for actual contribution to the delivered composite service, but their compensation is explicitly discounted when achieving that contribution requires higher execution cost. 

\begin{proposition}[Budget Balance]
For any request $r$ such that $\sum_{j=1}^{|\mathcal{E}|} s_j(r)>0$, the proposed allocation is budget-balanced:
\begin{equation}
\sum_{i=1}^{|\mathcal{E}|} a_i(r)=R_r.
\end{equation}
\end{proposition}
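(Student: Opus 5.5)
The plan is a direct computation from the definition of the cost-adjusted revenue allocation. Fix a request $r$ with $S(r):=\sum_{j=1}^{|\mathcal{E}|}s_j(r)>0$.

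\textbf{Step 1: the allocation is well defined.} Each score $s_i(r)$ is a finite, nonnegative real number. By the Bounded Cost and Measurability assumption, $0\le w_i(x,r)\le 1$ and $c_i(x,r)$ is measurable and bounded. Hence the integrand $w_i(x,r)\exp(-\lambda c_i(x,r))$ is measurable and lies in $[0,1]$, so its expectation under $\mathcal{D}_r$ exists and lies in $[0,1]$. Together with the hypothesis $S(r)>0$, this makes every quotient $s_i(r)/S(r)$ well defined, and hence every $a_i(r)$ is well defined.

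\textbf{Step 2: sum the allocations.} The factor $R_r/S(r)$ does not depend on the summation index, so linearity of finite sums gives
\begin{equation*}
\sum_{i=1}^{|\mathcal{E}|} a_i(r)
=\frac{R_r}{S(r)}\sum_{i=1}^{|\mathcal{E}|} s_i(r)
=\frac{R_r}{S(r)}\,S(r)
=R_r .
\end{equation*}
The argument does not use $R_r\ge 0$; that assumption only ensures each individual payment is nonnegative.

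There is no genuine obstacle. The only point requiring care is the well-definedness in Step 1, namely the existence of the expectations and the nonvanishing denominator, which is exactly what the hypothesis $\sum_j s_j(r)>0$ supplies.
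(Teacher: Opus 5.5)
Your proposal is correct and matches the paper's proof, which likewise factors the index-independent ratio $R_r/\sum_j s_j(r)$ out of the sum and cancels it against $\sum_i s_i(r)$. Your Step 1 on well-definedness is extra care the paper omits; strictly, the assumption only asserts measurability of $c_i$, so measurability of $w_i(\cdot,r)$ is a harmless implicit addition.
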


\begin{proof}
By definition,
\begin{equation}
\sum_{i=1}^{|\mathcal{E}|} a_i(r)
=
\sum_{i=1}^{|\mathcal{E}|}
R_r\cdot\frac{s_i(r)}{\sum_{j=1}^{|\mathcal{E}|} s_j(r)}
=
R_r\cdot
\frac{\sum_{i=1}^{|\mathcal{E}|} s_i(r)}{\sum_{j=1}^{|\mathcal{E}|} s_j(r)}
=
R_r.
\end{equation}
\end{proof}

\begin{proposition}[Participation Monotonicity]
For any fixed request $r$ and residual revenue $R_r> 0$, if two experts satisfy $s_i(r)\ge s_j(r)$, then $a_i(r)\ge a_j(r)$.
Moreover, if $s_i(r)>s_j(r)$, $\sum_k s_k(r)>0$, and $R_r> 0$, then $a_i(r)>a_j(r)$.
\end{proposition}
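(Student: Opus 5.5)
The plan is to read the monotonicity directly off the closed form of the allocation rule. Write $S(r)=\sum_{k=1}^{|\mathcal{E}|} s_k(r)$. By the Cost-Adjusted Revenue Allocation definition, $a_i(r)=R_r\, s_i(r)/S(r)$. So for fixed $r$, every expert's payment is the same affine-free multiple of its own score, with common factor $\kappa(r)=R_r/S(r)$. The whole statement then reduces to the sign of this common factor.

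First I will record the basic signs. Each $s_k(r)$ is the expectation of $w_k(x,r)\exp(-\lambda c_k(x,r))$. This integrand is nonnegative, since $w_k\ge 0$ and the exponential is positive by the Bounded Cost and Measurability assumption, so $s_k(r)\ge 0$ and $S(r)\ge 0$. Since $\sum_k w_k(x,r)=1$ and $c_k\le C_{\max}$, we even get
\begin{equation*}
S(r)\ \ge\ \mathbb{E}_{x\sim\mathcal{D}_r}\Big[\sum_k w_k(x,r)\,e^{-\lambda C_{\max}}\Big]=e^{-\lambda C_{\max}}>0.
\end{equation*}
Hence the denominator is automatically positive under the standing assumption, and $a_i(r)$ is well defined.

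For the weak part, with $R_r>0$ and $S(r)>0$ we have $\kappa(r)>0$. Then
\begin{equation*}
a_i(r)-a_j(r)=\kappa(r)\,\big(s_i(r)-s_j(r)\big),
\end{equation*}
which is $\ge 0$ whenever $s_i(r)\ge s_j(r)$. For the strict part, the hypotheses $S(r)>0$ and $R_r>0$ give $\kappa(r)>0$ strictly. A strictly positive number times the strictly positive difference $s_i(r)-s_j(r)$ is strictly positive, so $a_i(r)>a_j(r)$.

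There is no real obstacle here. The only delicate point is making sure the weak statement does not silently rely on $S(r)>0$, since it is not listed as a hypothesis there. I will handle this with the lower bound $S(r)\ge e^{-\lambda C_{\max}}$ derived above. As a fallback, I would note that the allocation is only defined when $S(r)>0$, the same convention as in the Budget Balance proposition.
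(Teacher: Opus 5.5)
Your proof is correct and takes the same approach as the paper: both write $a_i(r)$ as the common factor $R_r/\sum_k s_k(r)$ times $s_i(r)$, and read off weak and strict monotonicity from the positivity of that factor. Your bound $\sum_k s_k(r)\ge e^{-\lambda C_{\max}}>0$ is a useful addition: it shows the denominator is automatically positive under the standing assumption, which the paper's proof leaves implicit. One small fix: rename your factor $\kappa(r)$, since it collides with the paper's coordination fee $\kappa_r$.
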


\begin{proof}
Since $a_i(r)=R_r\, s_i(r)/\sum_k s_k(r)$ and $R_r\, s_i(r)/\sum_k s_k(r)>0$, the mapping $s_i\mapsto p_i$ is strictly increasing when $R_r>0$.
Thus $s_i(r)\ge s_j(r)$ implies $a_i(r)\ge a_j(r)$, with strict inequality when $s_i(r)>s_j(r)$.
\end{proof}

\begin{proposition}[Cost Sensitivity]
Fix request $r$ and assume $\sum_{k=1}^{|\mathcal{E}|} s_k(r)>0$.
For any expert $E_i$, if its cost function increases pointwise, i.e., $\tilde c_i(x,r)\ge c_i(x,r)$ for all $x$, while all other quantities remain unchanged, then its score weakly decreases:
\begin{equation}
\tilde s_i(r)\le s_i(r),
\end{equation}
and consequently its payment weakly decreases:
\begin{equation}
\tilde a_i(r)\le a_i(r).
\end{equation}
\end{proposition}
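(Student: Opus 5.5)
The argument has two parts: a pointwise monotonicity step for the score, and then a monotonicity step for the proportional share $a_i=R_r s_i/\sum_k s_k$ in its own argument. Throughout, "all other quantities remain unchanged" is read as holding fixed the gating weights $w_i(x,r)$, the other experts' costs $c_j$ for $j\neq i$, the residual pool $R_r$, and the distribution $\mathcal{D}_r$. Under that reading $\tilde s_j(r)=s_j(r)$ for every $j\neq i$.

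\textbf{Score step.} Since $\lambda>0$ and $t\mapsto e^{-\lambda t}$ is decreasing, $\tilde c_i(x,r)\ge c_i(x,r)$ gives $e^{-\lambda\tilde c_i(x,r)}\le e^{-\lambda c_i(x,r)}$ for every $x$. Multiplying by $w_i(x,r)\ge 0$ preserves the inequality. The integrands are measurable (assuming, as in the bounded-cost assumption, that $\tilde c_i$ is measurable) and bounded in $[0,1]$, so taking expectations under $\mathcal{D}_r$ yields $\tilde s_i(r)\le s_i(r)$.

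\textbf{Payment step.} Write $S_{-i}=\sum_{k\neq i}s_k(r)\ge 0$, which is unchanged, and define $h(s)=R_r\,s/(s+S_{-i})$ for $s\ge 0$. Then $a_i=h(s_i)$ and $\tilde a_i=h(\tilde s_i)$, and I would argue by cases.

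If $S_{-i}>0$, then $h'(s)=R_rS_{-i}/(s+S_{-i})^2\ge 0$ because the pool satisfies $R_r\ge 0$. So $h$ is nondecreasing on $[0,\infty)$, and $\tilde s_i\le s_i$ gives $\tilde a_i\le a_i$.

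If $S_{-i}=0$, the hypothesis $\sum_k s_k>0$ forces $s_i>0$, and then $a_i=R_r$. If $\tilde s_i>0$ too, then $\tilde a_i=R_r=a_i$. If $\tilde s_i=0$, the new total score vanishes and the rule is undefined. I would adopt the natural convention $\tilde a_i=0$ (no positive score, no payment, as stated after the definition), which gives $\tilde a_i\le a_i$. Alternatively, I would note that $\tilde s_i=0$ is impossible whenever $w_i$ has positive mass and $\tilde c_i$ is finite, since then $e^{-\lambda\tilde c_i}>0$ on that set and so $\tilde s_i>0$.

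\textbf{Main obstacle.} The score step is routine. The only delicate point is that the new total $\tilde s_i+S_{-i}$ might vanish, so that the perturbed allocation is not well defined. I would handle this with the case split above, and the nondecreasing monotonicity of $s\mapsto s/(s+S_{-i})$ together with $R_r\ge 0$ does the rest.
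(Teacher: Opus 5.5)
Your proof is correct and follows the paper's argument exactly. Pointwise monotonicity of $e^{-\lambda c}$, multiplied by $w_i(x,r)\ge 0$ and integrated, gives $\tilde s_i(r)\le s_i(r)$; then $a_i(r)=R_r\,s_i(r)/(s_i(r)+S_{-i}(r))$ is weakly increasing in $s_i(r)$ with $S_{-i}(r)$ and $R_r\ge 0$ held fixed. Your case split for $S_{-i}(r)=0$ is a refinement the paper skips, and your second remark fully settles it: because $\sum_k w_k(x,r)=1$ and costs are finite, the integrand of the total score is strictly positive at every $x$, so the total score is positive for any cost profile and the perturbed allocation is always well defined.
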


\begin{proof}
Because $\lambda>0$ and the exponential function is strictly decreasing in its argument,
$\exp(-\lambda \tilde c_i(x,r))\le \exp(-\lambda c_i(x,r))$ holds for all $x$.
Multiplying by $w_i(x,r)\ge 0$ and taking expectation yields $\tilde s_i(r)\le s_i(r)$.

Let $S_{-i}(r)=\sum_{j\ne i}s_j(r)$ denote the total score of all experts other than $E_i$, which remains unchanged when only $c_i$ changes. The allocation to expert $E_i$ can be written as
\begin{equation}
a_i(r)=R_r\frac{s_i(r)}{s_i(r)+S_{-i}(r)}.
\end{equation}
For $R_r\ge 0$ and $S_{-i}(r)\ge 0$, this expression is weakly increasing in $s_i(r)$. Since increasing $c_i$ weakly decreases $s_i(r)$, it follows that $\tilde a_i(r)\le a_i(r)$.
\end{proof}

\begin{proposition}[Limiting Cases of $\lambda$]
Fix request $r$ and assume $\mathbb{E}_{x\sim\mathcal{D}_r}[w_i(x,r)]>0$ for at least one expert.
Then:
\begin{enumerate}
\item As $\lambda\to 0$, the allocation converges to participation-proportional sharing:
\begin{equation}
a_i(r)\ \longrightarrow\ R_r\cdot \mathbb{E}_{x\sim\mathcal{D}_r}[w_i(x,r)].
\end{equation}

\item As $\lambda\to\infty$, the allocation increasingly favors lower-cost experts among those with positive participation. In the special case where expert costs are request-level constants $c_i(r)$ and there exists a unique lowest-cost expert $i^\star$ with positive expected participation, the allocation concentrates on $i^\star$, i.e., $a_{i^\star}(r)\to R_r$ and $a_j(r)\to 0$ for $j\ne i^\star$.
\end{enumerate}
\end{proposition}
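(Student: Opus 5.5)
Part (1) follows by passing to the limit inside the expectations defining the scores. Part (2) follows by factoring the exponential of the minimal cost out of numerator and denominator.

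\textbf{Part (1).} Write $\bar w_i(r)=\mathbb{E}_{x\sim\mathcal{D}_r}[w_i(x,r)]$. For each $x$, $w_i(x,r)\exp(-\lambda c_i(x,r))\to w_i(x,r)$ as $\lambda\to 0$. By the Bounded Cost and Measurability assumption, the integrand is measurable and dominated by $w_i(x,r)\le 1$. Dominated convergence therefore gives $s_i(r)\to\bar w_i(r)$ for every $i$.

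The denominator then satisfies $\sum_j s_j(r)\to\sum_j\bar w_j(r)=\mathbb{E}[\sum_j w_j(x,r)]=1$, using the simplex constraint on the gating weights. This limit is strictly positive, consistent with the hypothesis that some $\bar w_i(r)>0$, so the ratio is continuous at the limit. Hence $a_i(r)\to R_r\,\bar w_i(r)$. We treat $R_r$ as independent of $\lambda$, since it depends only on $\hat\pi_r$, $\kappa_r$ and $C(r)$, none of which involve $\lambda$.

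\textbf{Part (2), general claim.} I would make ``increasingly favors lower-cost experts'' precise as a pairwise ratio statement. Suppose experts $i$ and $j$ have positive participation, and suppose $c_j(x,r)\ge c_i(x,r)+\delta$ on the support of $w_j$ for some $\delta>0$. Then $s_j(r)\le e^{-\lambda\delta}\,\mathbb{E}[w_j(x,r)e^{-\lambda c_i(x,r)}]$. This is only a heuristic form. In the constant-cost case the clean identity is
\[
\frac{a_j(r)}{a_i(r)}=\frac{\bar w_j(r)}{\bar w_i(r)}\,e^{-\lambda(c_j(r)-c_i(r))},
\]
which is monotonically decreasing in $\lambda$ whenever $c_j(r)>c_i(r)$. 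I would present the general claim through this ratio, noting that it is qualitative.

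\textbf{Part (2), special case.} With request-level constant costs, $s_i(r)=\bar w_i(r)e^{-\lambda c_i(r)}$. Experts with $\bar w_j(r)=0$ have $s_j=0$ and receive nothing for all $\lambda$. Multiply numerator and denominator by $e^{\lambda c_{i^\star}(r)}$, where $c_{i^\star}(r)$ is the strict minimum among experts with positive participation. This gives
\[
a_{i^\star}(r)=R_r\,\frac{\bar w_{i^\star}(r)}{\bar w_{i^\star}(r)+\sum_{j\ne i^\star}\bar w_j(r)e^{-\lambda(c_j(r)-c_{i^\star}(r))}}.
\]
Every exponent in the sum is strictly negative for participating $j$, so the sum tends to $0$. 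Hence $a_{i^\star}(r)\to R_r$, and by budget balance (the Budget Balance proposition) $a_j(r)\to0$ for $j\ne i^\star$.

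The main obstacle is the vague general statement in part (2). I would handle it through the ratio argument above and reserve the rigorous convergence claim for the special case. The only technical point in part (1) is justifying the exchange of limit and expectation, which the boundedness assumption settles.
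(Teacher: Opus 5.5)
Your proposal is correct and follows essentially the same route as the paper. Part (1) uses dominated convergence with the bound $w_i(x,r)\le 1$ and the simplex constraint $\sum_j \mathbb{E}[w_j]=1$, and part (2) factors out the dominant term $e^{-\lambda c_{i^\star}(r)}$ in the constant-cost case. The only differences are minor. The paper shows $s_j(r)/s_{i^\star}(r)\to 0$ to get $a_j(r)\to 0$ and then uses budget balance for $a_{i^\star}(r)\to R_r$; you get $a_{i^\star}(r)\to R_r$ directly and use budget balance together with $a_j(r)\ge 0$. Your ratio formulation of the qualitative general claim also goes slightly beyond what the paper proves.
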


\begin{proof}
(1) For each $i$, $e^{-\lambda c_i(x,r)}\to 1$ pointwise as $\lambda\to 0$ and is bounded by $1$.
By dominated convergence, $s_i(r)=\mathbb{E}[w_i e^{-\lambda c_i}]\to \mathbb{E}[w_i]$.
Because $\sum_j \mathbb{E}[w_j]=\mathbb{E}[\sum_j w_j]=1$, we obtain

\begin{equation}
a_i(r)=R_r\frac{s_i(r)}{\sum_j s_j(r)}
\to R_r\,\mathbb{E}[w_i].
\end{equation}

(2) In the constant-cost case, the score can be written as
\begin{equation}
s_i(r)=\exp(-\lambda c_i(r))\,\mathbb{E}_{x\sim\mathcal{D}_r}[w_i(x,r)].
\end{equation}
If $i^\star$ is the unique lowest-cost expert with positive expected participation, then for any $j\ne i^\star$,
\begin{equation}
\frac{s_j(r)}{s_{i^\star}(r)}
=
\frac{\mathbb{E}[w_j]}{\mathbb{E}[w_{i^\star}]}
\exp\!\left(-\lambda(c_j(r)-c_{i^\star}(r))\right)
\to 0
\end{equation}
as $\lambda\to\infty$. Therefore $a_j(r)\to 0$ for $j\ne i^\star$, and budget balance implies $a_{i^\star}(r)\to R_r$.
\end{proof}

\section{Experimental Evaluation}

In this section, we conduct a systematic experimental evaluation to validate the proposed MoE Market mechanism. Our evaluation is designed to answer three questions: whether MoE Market achieves predictive performance comparable to or better than Welfare-aware Static Selection, Uniform Average, and Standard MoE baselines; whether cost-aware routing improves market welfare under latency-derived expert costs; and whether the proposed allocation rule produces economically meaningful expert payments while offering lower computational overhead than exact Shapley allocation.

\subsection{Experimental Setup}
\label{experimental_setup}

In the experimental setting, each benchmark task is treated as a buyer request $r=(\tau,q,\pi)$, and the corresponding dataset is treated as the aggregated request-specific dataset $\mathcal{D}_r$ after data acquisition. The four independently trained experts instantiate heterogeneous model services with distinct architectures and execution costs, the gating network represents the broker's coordination mechanism, and the weighted ensemble output corresponds to the composite model service delivered to the buyer. Accordingly, each benchmark dataset is treated as the output of the upstream data discovery and integration stage. Since the experiments focus on expert coordination and allocation, we do not separately simulate data-provider discovery, relevance assessment, dataset integration, or negotiation.

\subsubsection{Datasets and tasks}

We evaluate our MoE market on a diverse collection of benchmarks covering regression, binary classification and multi-class classification tasks on tabular and image domains. In total, our experimental study includes fifteen datasets.
For each tabular dataset and random seed, we divide the available instances into 80\% training, 10\% validation, and 10\% test sets. Classification datasets are split using label stratification, whereas regression datasets are split without stratification. Numerical features are median-imputed and standardized, while categorical features are imputed using their most frequent values and one-hot encoded. Regression targets are additionally standardized using statistics estimated from the training split. All preprocessing transformations are fitted exclusively on the training data and subsequently applied to the validation and test splits. For image datasets, we retain the official benchmark test set and randomly reserve 10\% of the official training set for validation. All images are resized to $32\times32$ pixels, single-channel images are converted to three channels, and pixel values are normalized to $[-1,1]$. No data augmentation is applied. The same seed-specific data splits and processed representations are used by every expert and comparison method. This controlled protocol isolates architectural and coordination differences from differences in data access or preprocessing.
More precisely, preprocessing is fixed at the dataset level and is expert- and method-agnostic, all experts and compared methods receive the same processed representation. We do not assume that preprocessing is universally optimal or that task-dependent preprocessing cannot affect expert performance.

\paragraph{Tabular datasets.}
We use eleven tabular datasets to evaluate the proposed framework under heterogeneous conditions. The regression datasets include \textbf{Abalone} \cite{abalone_1}, \textbf{Concrete Compressive Strength} \cite{concrete_compressive_strength_165}, \textbf{Forest Fires} \cite{forest_fires_162}, \textbf{Bike Sharing} \cite{bike_sharing_275}, \textbf{Diabetes} \cite{diabetes_34}, and \textbf{Wine Quality} \cite{wine_quality_186}. The classification benchmarks include \textbf{Breast Cancer Wisconsin (Diagnostic)} \cite{breast_cancer_wisconsin}, \textbf{PageBlocks} \cite{page_blocks_classification_78}, \textbf{Yeast} \cite{yeast_110}, \textbf{Adult Census Income} \cite{adult_2}, and \textbf{Ionosphere} \cite{ionosphere_52}. These datasets vary substantially in sample size, feature dimension, label structure, and task difficulty, providing a broad test bed for evaluating cost-aware expert coordination.

\paragraph{Image datasets.}
We further evaluate the proposed framework on four widely used image classification benchmarks: \textbf{MNIST} \cite{lecun2002gradient}, \textbf{F-MNIST} \cite{xiao2017/online}, \textbf{KMNIST} \cite{clanuwat2018deep}, and \textbf{SVHN} \cite{netzer2011reading}. These datasets range from relatively simple grayscale character recognition tasks to more challenging natural image classification problems, allowing us to assess the scalability of the proposed framework across different visual domains and difficulty levels.

The tabular benchmarks are used to assess regression and classification behavior under heterogeneous structured features, while the image benchmarks evaluate the effectiveness of the framework in multi-class classification tasks.

\subsubsection{Heterogeneous Expert pools}
\label{heterogeneous expert pools}

To emulate independent heterogeneous experts, we construct an expert pool $\mathcal{E}$ consisting of four separately instantiated models with different architectural configurations and measured execution costs.  Expert independence in our experiments means that each expert is trained using its own predictive objective before the gating network is trained; gradients from the gating objective are never used to update the experts. All experts receive the same seed-specific training split and processed representation, thereby controlling data availability and isolating heterogeneity arising from model architecture and execution cost. Once fitted, the complete expert bank is frozen and shared by Standard MoE and MoE Market. Consequently, differences between the two methods arise from their gating objectives rather than from different expert models.






For tabular classification and regression tasks, the expert pool is instantiated as follows:

\begin{itemize}
    \item \textbf{Expert $E_1$:} a two-hidden-layer MLP with hidden dimensions 128 and 64 and ReLU activations;

    \item \textbf{Expert $E_2$:} a two-layer Transformer encoder with hidden dimension 64 and four attention heads, followed by mean pooling and a task-specific output head;

    \item \textbf{Expert $E_3$:} an XGBoost classifier or regressor \cite{chen2016xgboost};

    \item \textbf{Expert $E_4$:} a CatBoost classifier or regressor \cite{prokhorenkova2018catboost}.
\end{itemize}

For image classification tasks, the expert pool is instantiated as follows:

\begin{itemize}
    \item \textbf{Expert $E_1$:} an image MLP with two hidden layers of dimension 512, applied to the flattened image representation;

    \item \textbf{Expert $E_2$:} a three-layer patch Transformer with patch size $4\times4$, hidden dimension 128, and four attention heads;

    \item \textbf{Expert $E_3$:} an XGBoost classifier applied to flattened image features;

    \item \textbf{Expert $E_4$:} a CatBoost classifier applied to flattened image features.
\end{itemize}

The image experts operate on the same $3\times32\times32$ input representation, and the output layer of each expert is adapted to the number of classes. The combination of neural, attention-based, and gradient-boosting architectures introduces substantially different inductive biases without assuming in advance that any expert is universally superior or that a particular form of complementarity must occur. The gating network is responsible for learning their input-dependent suitability.

\subsubsection{Gating Network Architecture}

For tabular classification and regression tasks, the gating network is a one-hidden-layer MLP consisting of a linear layer from the input dimension to 64 hidden units, a ReLU activation, and a four-dimensional output layer. For image tasks, the gate uses two convolutional layers with 16 and 32 output channels, respectively, followed by ReLU activations, pooling, and a linear four-dimensional output layer. In both cases, softmax converts the output logits into expert-routing weights. The independently fitted experts remain frozen while the gating network is trained. Standard MoE and MoE Market use the same expert bank, gating architecture, seed-specific data split, and initialization protocol on each task.

\subsubsection{Experts Costs}
\label{experts costs}

In realistic model markets, execution cost may reflect computation time, latency, resource consumption, deployment overhead, or API usage. Rather than assigning synthetic costs manually, we derive an empirical unit-cost profile from the measured inference latency of the independently fitted experts. This produces costs that depend on both the expert architecture and the request-specific input representation.

For each dataset, we measure each expert on a training batch containing up to 128 instances. We first perform 10 warm-up passes and then record 50 timed forward passes, synchronizing CUDA execution before and after each measurement when applicable. Let $\widetilde{c}_i(r)$ denote the median measured latency in milliseconds per instance for expert $E_i$. We normalize these measurements by their mean across the expert pool:

\begin{equation*}
{
c_i(r)
=
\frac{\widetilde{c}_i(r)}
{\frac{1}{|\mathcal{E}|}\sum_{j=1}^{|\mathcal{E}|}\widetilde{c}_j(r)}.
}
\end{equation*}

The resulting cost profile has mean one and preserves the measured relative execution cost of the four experts. A single profile is measured for each dataset using the expert bank associated with the first configured random seed and is then held fixed across all methods and seed repetitions for that dataset. This provides every method with the same market environment and prevents repeated timing noise from confounding comparisons between routing policies.

\subsubsection{Implementation and hardware configuration}

All experiments were implemented in Python 3.8.20 using PyTorch 2.4.1, scikit-learn 1.3.2, XGBoost 2.1.4, and CatBoost 1.2.8. The revised experiments were conducted under CUDA 12.1 on a Windows system equipped with an AMD Ryzen 9 7845HX CPU and an NVIDIA RTX 4070 GPU. Unless otherwise stated, all methods use the same software and hardware environment, seed-specific training, validation, and test splits, independently fitted expert bank, and latency-derived cost profile.

\subsection{Baselines}
\label{baselines}

We compare the proposed \textbf{MoE Market} against three representative baselines. All methods use the same independently trained and frozen expert bank, dataset split, measured expert-cost profile, and, where applicable, gating-network architecture on each dataset.

\textbf{(1) Welfare-aware Static Selection (WSS).}
This baseline represents a conventional model market in which one model is selected for the entire buyer request. For each expert $E_i$, the broker estimates its validation welfare as

\begin{equation}
\begin{aligned}
V_i^{\mathrm{val}}(r)
={}&
\alpha\,
\mathbb{E}_{(x,y)\sim\mathcal{D}_r^{\mathrm{val}}}
\left[
\exp\!\left(
-\frac{\ell_\tau(y,\hat{y}_i(x))}
{\ell_{\mathrm{ref}}}
\right)
\right]
-
\beta_{\mathrm{deploy}}c_i(r),
\end{aligned}
\end{equation}

where $\mathcal{D}_r^{\mathrm{val}}$ is the validation split and $c_i(r)$ is the measured execution cost of expert $E_i$. The selected expert is

\begin{equation}
i_{\mathrm{WSS}}^\ast
=
\arg\max_{i\in\{1,\ldots,|\mathcal{E}|\}}
V_i^{\mathrm{val}}(r),
\qquad
\hat{y}(x)=\hat{y}_{i_{\mathrm{WSS}}^\ast}(x).
\end{equation}

The selection uses only validation performance and measured cost; no test-set outcomes are used.

\textbf{(2) Uniform Average.}
This baseline uniformly aggregates the outputs of all frozen experts. Let $\hat{o}_i(x)$ denote the predictive probability vector for classification or the point prediction for regression. Its composite output is

\begin{equation}
\hat{o}_{\mathrm{UA}}(x)
=
\frac{1}{|\mathcal{E}|}
\sum_{i=1}^{|\mathcal{E}|}\hat{o}_i(x).
\end{equation}

Uniform Average represents non-adaptive collaboration: it ignores both input-dependent expert capabilities and execution costs.

\textbf{(3) Standard MoE.}
Standard MoE learns input-dependent mixture weights over the same frozen expert bank but does not include expert costs in its training objective. For each input $x$,

\begin{equation}
\begin{aligned}
w(x)&=\operatorname{softmax}(g(x;\phi))
      \in\Delta^{|\mathcal{E}|},\\
\hat{o}_{\mathrm{SM}}(x)
&=
\sum_{i=1}^{|\mathcal{E}|}
w_i(x)\hat{o}_i(x).
\end{aligned}
\end{equation}

The gating parameters are optimized solely for predictive performance:

\begin{equation}
\min_{\phi}\ 
\mathbb{E}_{(x,y)\sim\mathcal{D}_r}
\left[
\ell_\tau\bigl(y,\hat{o}_{\mathrm{SM}}(x)\bigr)
\right].
\end{equation}

\textbf{(4) MoE Market (Ours).}
MoE Market uses the same frozen expert bank and gating-network architecture as Standard MoE, but trains the gate using the market-aware objective in Eq.~\eqref{eq:L_market}.

\subsection{Model Training}
\label{model_training}
For each dataset and random seed, the four experts are first fitted independently on the same training split. The MLP and Transformer experts are trained for 10 epochs using Adam with a learning rate of $1\times10^{-3}$, while XGBoost and CatBoost use 200 boosting iterations, a maximum depth of 6, and a learning rate of $0.05$. The neural-expert checkpoint with the lowest validation loss is retained. All experts are then frozen, ensuring that the compared coordination methods use an identical expert bank.

Standard MoE and MoE Market use gates with the same architecture and initialization. Each gate is trained separately for 10 epochs with Adam, a learning rate of $1\times10^{-3}$, and a batch size of 128. Standard MoE minimizes predictive loss, whereas MoE Market minimizes Eq.~\eqref{eq:L_market}. The measured expert-cost profile is fixed before gate training and shared across all methods. For the main experiments, we set $\beta=0.3$ and $\ell_{\mathrm{ref}}=1$, with $\alpha=5$ for tabular tasks and $\alpha=1$ for image tasks. The checkpoint minimizing the corresponding validation objective is used for test evaluation.

\begin{table*}[!tp]
  \caption{Experimental results on tabular classification tasks (mean $\pm$ standard deviation over five random seeds).}
  \label{tab:tabular_classification}
  \centering
  \setlength{\tabcolsep}{8pt}
  \begin{tabular}{llccccc}
    \toprule
    \textbf{Dataset} &
    \textbf{Method} &
    \textbf{AUC $\uparrow$} &
    \textbf{Macro-F1 $\uparrow$} &
    \textbf{Accuracy $\uparrow$} &
    \textbf{Cost $\downarrow$} &
    \textbf{Welfare $\uparrow$} \\
    \midrule

    \multirow{4}{*}{\textbf{Breast Cancer}}
      & WSS
      & $\mathbf{0.9918 \pm 0.0085}$
      & $0.9387 \pm 0.0291$
      & $0.9439 \pm 0.0260$
      & $\mathbf{0.1518 \pm 0.0000}$
      & $\underline{4.1868 \pm 0.0559}$ \\
      & UA
      & $0.9886 \pm 0.0132$
      & $0.9386 \pm 0.0320$
      & $0.9439 \pm 0.0288$
      & $1.0000 \pm 0.0000$
      & $4.0355 \pm 0.0970$ \\
      & SM
      & $\underline{0.9889 \pm 0.0109}$
      & $\mathbf{0.9620 \pm 0.0135}$
      & $\mathbf{0.9649 \pm 0.0124}$
      & $1.6120 \pm 0.1966$
      & $4.0374 \pm 0.0950$ \\
      & \textbf{MM}
      & $0.9881 \pm 0.0107$
      & $\underline{0.9539 \pm 0.0228}$
      & $\underline{0.9579 \pm 0.0200}$
      & $\underline{0.9447 \pm 0.1272}$
      & $\mathbf{4.3163 \pm 0.0829}$ \\
    \addlinespace[1pt]
    \cmidrule(lr){1-7}

    \multirow{4}{*}{\textbf{PageBlocks}}
      & WSS
      & $0.8023 \pm 0.2016$
      & $0.2609 \pm 0.0982$
      & $0.9164 \pm 0.0256$
      & $\mathbf{0.2932 \pm 0.0000}$
      & $4.4116 \pm 0.0360$ \\
      & UA
      & $\underline{0.9859 \pm 0.0070}$
      & $0.2982 \pm 0.1059$
      & $0.9084 \pm 0.0093$
      & $1.0000 \pm 0.0000$
      & $4.2554 \pm 0.0328$ \\
      & SM
      & $\mathbf{0.9869 \pm 0.0078}$
      & $\mathbf{0.8053 \pm 0.0553}$
      & $\mathbf{0.9668 \pm 0.0054}$
      & $0.9977 \pm 0.0392$
      & $\underline{4.4807 \pm 0.1188}$ \\
      & \textbf{MM}
      & $0.9082 \pm 0.1452$
      & $\underline{0.6151 \pm 0.2109}$
      & $\underline{0.9529 \pm 0.0156}$
      & $\underline{0.3180 \pm 0.0045}$
      & $\mathbf{4.6084 \pm 0.0541}$ \\
    \addlinespace[1pt]
    \cmidrule(lr){1-7}

    \multirow{4}{*}{\textbf{Yeast}}
      & WSS
      & $0.7719 \pm 0.0723$
      & $0.3446 \pm 0.1132$
      & $0.5168 \pm 0.0507$
      & $\mathbf{0.3479 \pm 0.0000}$
      & $\underline{2.1707 \pm 0.1552}$ \\
      & UA
      & $\underline{0.8638 \pm 0.0318}$
      & $0.2084 \pm 0.0396$
      & $0.4752 \pm 0.0599$
      & $1.0000 \pm 0.0000$
      & $1.4049 \pm 0.0636$ \\
      & SM
      & $\mathbf{0.8649 \pm 0.0388}$
      & $\mathbf{0.4268 \pm 0.0982}$
      & $\mathbf{0.5624 \pm 0.0198}$
      & $0.8770 \pm 0.0466$
      & $1.8221 \pm 0.0716$ \\
      & \textbf{MM}
      & $0.6469 \pm 0.0732$
      & $\underline{0.3744 \pm 0.0617}$
      & $\underline{0.5262 \pm 0.0147}$
      & $\underline{0.4898 \pm 0.0240}$
      & $\mathbf{2.3148 \pm 0.0738}$ \\
    \addlinespace[1pt]
    \cmidrule(lr){1-7}

    \multirow{4}{*}{\textbf{Adult Census}}
      & WSS
      & $0.9135 \pm 0.0015$
      & $0.7944 \pm 0.0040$
      & $0.8587 \pm 0.0030$
      & $\mathbf{0.0508 \pm 0.0000}$
      & $\underline{4.0048 \pm 0.0213}$ \\
      & UA
      & $0.9209 \pm 0.0029$
      & $0.7976 \pm 0.0034$
      & $0.8682 \pm 0.0022$
      & $1.0000 \pm 0.0000$
      & $3.6362 \pm 0.0204$ \\
      & SM
      & $\mathbf{0.9293 \pm 0.0022}$
      & $\underline{0.8140 \pm 0.0046}$
      & $\underline{0.8739 \pm 0.0024}$
      & $0.4127 \pm 0.0261$
      & $3.9786 \pm 0.0138$ \\
      & \textbf{MM}
      & $\underline{0.9287 \pm 0.0018}$
      & $\mathbf{0.8149 \pm 0.0040}$
      & $\mathbf{0.8744 \pm 0.0030}$
      & $\underline{0.1086 \pm 0.0082}$
      & $\mathbf{4.0813 \pm 0.0215}$ \\
    \addlinespace[1pt]
    \cmidrule(lr){1-7}

    \multirow{4}{*}{\textbf{Ionosphere}}
      & WSS
      & $0.9244 \pm 0.0639$
      & $0.8433 \pm 0.1045$
      & $0.8722 \pm 0.0800$
      & $\mathbf{0.2589 \pm 0.0000}$
      & $\underline{3.5220 \pm 0.2109}$ \\
      & UA
      & $0.9298 \pm 0.0637$
      & $0.8069 \pm 0.0839$
      & $0.8444 \pm 0.0639$
      & $1.0000 \pm 0.0000$
      & $3.1761 \pm 0.1331$ \\
      & SM
      & $\underline{0.9358 \pm 0.0582}$
      & $\mathbf{0.8896 \pm 0.0795}$
      & $\mathbf{0.9056 \pm 0.0639}$
      & $1.1372 \pm 0.0655$
      & $3.3655 \pm 0.2115$ \\
      & \textbf{MM}
      & $\mathbf{0.9445 \pm 0.0356}$
      & $\underline{0.8819 \pm 0.0551}$
      & $\underline{0.8944 \pm 0.0456}$
      & $\underline{0.9131 \pm 0.0258}$
      & $\mathbf{3.6524 \pm 0.1851}$ \\
    \bottomrule
  \end{tabular}

  \parbox{0.9\textwidth}{
    \footnotesize
    WSS: Welfare-aware Static Selection;\,
    UA: Uniform Average;\,
    SM: Standard MoE;\,
    MM: MoE Market.
    Bold and underlined values denote the best and second-best results, respectively.
  }
\end{table*}
\begin{table*}[!tp]
  \caption{Experimental results on tabular regression tasks (mean $\pm$ standard deviation over five random seeds).}
  \label{tab:tabular_regression}
  \centering
  \setlength{\tabcolsep}{8pt}
    \begin{tabular}{llccccc}
    \toprule
    \textbf{Dataset} &
    \textbf{Method} &
    \textbf{MSE $\downarrow$} &
    \textbf{RMSE $\downarrow$} &
    \textbf{MAE $\downarrow$} &
    \textbf{Cost $\downarrow$} &
    \textbf{Welfare $\uparrow$} \\
    \midrule

    \multirow{4}{*}{\textbf{Abalone}}
      & WSS
      & $0.6130 \pm 0.1387$
      & $0.7791 \pm 0.0874$
      & $0.5378 \pm 0.0448$
      & $\mathbf{0.4529 \pm 0.0293}$
      & $3.6121 \pm 0.0941$ \\
      & UA
      & $0.5463 \pm 0.0728$
      & $0.7378 \pm 0.0504$
      & $0.5998 \pm 0.0663$
      & $1.0000 \pm 0.0000$
      & $3.1369 \pm 0.2293$ \\
      & SM
      & $\mathbf{0.4312 \pm 0.0399}$
      & $\mathbf{0.6561 \pm 0.0304}$
      & $\underline{0.4738 \pm 0.0188}$
      & $0.7965 \pm 0.0387$
      & $\underline{3.6344 \pm 0.0466}$ \\
      & \textbf{MM}
      & $\underline{0.4671 \pm 0.0487}$
      & $\underline{0.6828 \pm 0.0350}$
      & $\mathbf{0.4648 \pm 0.0226}$
      & $\underline{0.4979 \pm 0.0119}$
      & $\mathbf{3.7906 \pm 0.0653}$ \\
    \addlinespace[1pt]
    \cmidrule(lr){1-7}

    \multirow{4}{*}{\textbf{Concrete}}
      & WSS
      & $0.5566 \pm 0.1292$
      & $0.7419 \pm 0.0884$
      & $0.6108 \pm 0.0871$
      & $\mathbf{0.4968 \pm 0.0058}$
      & $3.2127 \pm 0.2923$ \\
      & UA
      & $0.3194 \pm 0.0850$
      & $0.5616 \pm 0.0712$
      & $0.4521 \pm 0.0690$
      & $1.0000 \pm 0.0000$
      & $3.6102 \pm 0.2303$ \\
      & SM
      & $\underline{0.2666 \pm 0.0455}$
      & $\underline{0.5148 \pm 0.0444}$
      & $\underline{0.4016 \pm 0.0368}$
      & $0.8943 \pm 0.0382$
      & $\underline{3.8131 \pm 0.1212}$ \\
      & \textbf{MM}
      & $\mathbf{0.2522 \pm 0.0582}$
      & $\mathbf{0.4995 \pm 0.0586}$
      & $\mathbf{0.3754 \pm 0.0421}$
      & $\underline{0.7064 \pm 0.0251}$
      & $\mathbf{3.9421 \pm 0.1352}$ \\
    \addlinespace[1pt]
    \cmidrule(lr){1-7}

    \multirow{4}{*}{\textbf{Forest Fires}}
      & WSS
      & $3.9527 \pm 3.7964$
      & $1.8131 \pm 0.9121$
      & $\underline{0.4015 \pm 0.1302}$
      & $\mathbf{0.2483 \pm 0.0301}$
      & $\underline{4.4573 \pm 0.0609}$ \\
      & UA
      & $\mathbf{3.9237 \pm 3.7619}$
      & $\underline{1.8045 \pm 0.9134}$
      & $0.4414 \pm 0.1235$
      & $1.0000 \pm 0.0000$
      & $4.3121 \pm 0.0144$ \\
      & SM
      & $\underline{3.9237 \pm 3.7648}$
      & $\mathbf{1.8042 \pm 0.9142}$
      & $0.4363 \pm 0.1227$
      & $0.8984 \pm 0.0769$
      & $4.3504 \pm 0.0286$ \\
      & \textbf{MM}
      & $3.9679 \pm 3.7788$
      & $1.8164 \pm 0.9142$
      & $\mathbf{0.3834 \pm 0.1426}$
      & $\underline{0.7192 \pm 0.0958}$
      & $\mathbf{4.5821 \pm 0.0483}$ \\
    \addlinespace[1pt]
    \cmidrule(lr){1-7}

    \multirow{4}{*}{\textbf{Bike Sharing}}
      & WSS
      & $0.5828 \pm 0.0934$
      & $0.7614 \pm 0.0621$
      & $0.5380 \pm 0.0603$
      & $\mathbf{0.1280 \pm 0.0374}$
      & $3.6817 \pm 0.2126$ \\
      & UA
      & $0.5454 \pm 0.2501$
      & $0.7225 \pm 0.1710$
      & $0.6183 \pm 0.2204$
      & $1.0000 \pm 0.0000$
      & $3.0229 \pm 0.7856$ \\
      & SM
      & $\underline{0.0530 \pm 0.0041}$
      & $\underline{0.2301 \pm 0.0089}$
      & $\underline{0.1428 \pm 0.0046}$
      & $0.4576 \pm 0.0657$
      & $\underline{4.6657 \pm 0.0285}$ \\
      & \textbf{MM}
      & $\mathbf{0.0520 \pm 0.0032}$
      & $\mathbf{0.2281 \pm 0.0070}$
      & $\mathbf{0.1363 \pm 0.0038}$
      & $\underline{0.1415 \pm 0.0078}$
      & $\mathbf{4.7727 \pm 0.0111}$ \\
    \addlinespace[1pt]
    \cmidrule(lr){1-7}

    \multirow{4}{*}{\textbf{Diabetes}}
      & WSS
      & $0.5654 \pm 0.0546$
      & $0.7512 \pm 0.0360$
      & $0.6197 \pm 0.0306$
      & $\mathbf{0.4954 \pm 0.0000}$
      & $3.2245 \pm 0.1052$ \\
      & UA
      & $0.5361 \pm 0.0623$
      & $\underline{0.7312 \pm 0.0417}$
      & $0.5910 \pm 0.0583$
      & $1.0000 \pm 0.0000$
      & $3.1687 \pm 0.1913$ \\
      & SM
      & $\mathbf{0.4893 \pm 0.0466}$
      & $\mathbf{0.6989 \pm 0.0330}$
      & $\mathbf{0.5604 \pm 0.0470}$
      & $0.9196 \pm 0.0402$
      & $\underline{3.2352 \pm 0.1041}$ \\
      & \textbf{MM}
      & $\underline{0.5360 \pm 0.0305}$
      & $0.7319 \pm 0.0206$
      & $\underline{0.5821 \pm 0.0244}$
      & $\underline{0.8700 \pm 0.0519}$
      & $\mathbf{3.2760 \pm 0.1476}$ \\
    \addlinespace[1pt]
    \cmidrule(lr){1-7}

    \multirow{4}{*}{\textbf{Wine Quality}}
      & WSS
      & $0.8400 \pm 0.0812$
      & $0.9156 \pm 0.0455$
      & $0.7101 \pm 0.0289$
      & $\mathbf{0.4311 \pm 0.0742}$
      & $3.0229 \pm 0.0851$ \\
      & UA
      & $0.6896 \pm 0.0695$
      & $0.8296 \pm 0.0416$
      & $0.6546 \pm 0.0317$
      & $1.0000 \pm 0.0000$
      & $2.9894 \pm 0.0927$ \\
      & SM
      & $\mathbf{0.5994 \pm 0.0648}$
      & $\mathbf{0.7733 \pm 0.0423}$
      & $\underline{0.6088 \pm 0.0286}$
      & $0.7661 \pm 0.0860$
      & $\underline{3.2029 \pm 0.0706}$ \\
      & \textbf{MM}
      & $\underline{0.6105 \pm 0.0636}$
      & $\underline{0.7805 \pm 0.0411}$
      & $\mathbf{0.6031 \pm 0.0293}$
      & $\underline{0.4931 \pm 0.0218}$
      & $\mathbf{3.3010 \pm 0.0673}$ \\
    \bottomrule
  \end{tabular}

  \parbox{0.9\textwidth}{
    \footnotesize
    WSS: Welfare-aware Static Selection;\,
    UA: Uniform Average;\,
    SM: Standard MoE;\,
    MM: MoE Market.
    Bold and underlined values denote the best and second-best results,
    respectively; rankings are determined from unrounded means.
  }
\end{table*}
\begin{table*}[!tp]
  \caption{Experimental results on image classification tasks (mean $\pm$ standard deviation over five random seeds).}
  \label{tab:image_classification}
  \centering
  \setlength{\tabcolsep}{8pt}
\begin{tabular}{llccccc}
    \toprule
    \textbf{Dataset} &
    \textbf{Method} &
    \textbf{AUC $\uparrow$} &
    \textbf{Macro-F1 $\uparrow$} &
    \textbf{Accuracy $\uparrow$} &
    \textbf{Cost $\downarrow$} &
    \textbf{Welfare $\uparrow$} \\
    \midrule

    \multirow{4}{*}{\textbf{MNIST}}
      & WSS
      & $0.9799 \pm 0.0260$
      & $0.8183 \pm 0.0073$
      & $0.8589 \pm 0.0045$
      & $\mathbf{0.0843 \pm 0.0000}$
      & $0.8311 \pm 0.0047$ \\
      & UA
      & $\underline{0.9924 \pm 0.0122}$
      & $0.4401 \pm 0.2955$
      & $0.4717 \pm 0.2580$
      & $1.0000 \pm 0.0000$
      & $0.0379 \pm 0.0419$ \\
      & SM
      & $\mathbf{0.9991 \pm 0.0003}$
      & $\mathbf{0.9680 \pm 0.0045}$
      & $\mathbf{0.9683 \pm 0.0044}$
      & $0.1580 \pm 0.1573$
      & $\underline{0.8632 \pm 0.0916}$ \\
      & \textbf{MM}
      & $0.9737 \pm 0.0100$
      & $\underline{0.8841 \pm 0.1219}$
      & $\underline{0.9034 \pm 0.0952}$
      & $\underline{0.0933 \pm 0.0032}$
      & $\mathbf{0.9053 \pm 0.0446}$ \\
    \addlinespace[1pt]
    \cmidrule(lr){1-7}

    \multirow{4}{*}{\textbf{Fashion-MNIST}}
      & WSS
      & $\underline{0.9872 \pm 0.0041}$
      & $\underline{0.8554 \pm 0.0339}$
      & $\underline{0.8570 \pm 0.0314}$
      & $\mathbf{0.0794 \pm 0.0102}$
      & $\underline{0.7975 \pm 0.0339}$ \\
      & UA
      & $0.9751 \pm 0.0110$
      & $0.5215 \pm 0.1979$
      & $0.5500 \pm 0.1816$
      & $1.0000 \pm 0.0000$
      & $0.0740 \pm 0.0692$ \\
      & SM
      & $0.9660 \pm 0.0092$
      & $0.8049 \pm 0.0513$
      & $0.8239 \pm 0.0343$
      & $0.0910 \pm 0.0268$
      & $0.7956 \pm 0.0302$ \\
      & \textbf{MM}
      & $\mathbf{0.9896 \pm 0.0009}$
      & $\mathbf{0.8736 \pm 0.0063}$
      & $\mathbf{0.8740 \pm 0.0062}$
      & $\underline{0.0812 \pm 0.0219}$
      & $\mathbf{0.8078 \pm 0.0027}$ \\
    \addlinespace[1pt]
    \cmidrule(lr){1-7}

    \multirow{4}{*}{\textbf{KMNIST}}
      & WSS
      & $0.9560 \pm 0.0254$
      & $\underline{0.7798 \pm 0.0683}$
      & $\underline{0.7856 \pm 0.0629}$
      & $\mathbf{0.0788 \pm 0.0000}$
      & $\underline{0.7364 \pm 0.0558}$ \\
      & UA
      & $\underline{0.9710 \pm 0.0057}$
      & $0.5817 \pm 0.2131$
      & $0.5808 \pm 0.1926$
      & $1.0000 \pm 0.0000$
      & $0.0516 \pm 0.0411$ \\
      & SM
      & $0.9131 \pm 0.0180$
      & $0.7181 \pm 0.0413$
      & $0.7420 \pm 0.0284$
      & $0.0921 \pm 0.0056$
      & $0.7150 \pm 0.0282$ \\
      & \textbf{MM}
      & $\mathbf{0.9860 \pm 0.0016}$
      & $\mathbf{0.8646 \pm 0.0066}$
      & $\mathbf{0.8642 \pm 0.0068}$
      & $\underline{0.0837 \pm 0.0109}$
      & $\mathbf{0.8014 \pm 0.0103}$ \\
    \addlinespace[1pt]
    \cmidrule(lr){1-7}

    \multirow{4}{*}{\textbf{SVHN}}
      & WSS
      & $0.6905 \pm 0.0253$
      & $0.3817 \pm 0.0638$
      & $0.4994 \pm 0.0347$
      & $\mathbf{0.0883 \pm 0.0053}$
      & $0.4712 \pm 0.0347$ \\
      & UA
      & $0.9217 \pm 0.0091$
      & $0.6054 \pm 0.0506$
      & $0.6206 \pm 0.0589$
      & $1.0000 \pm 0.0000$
      & $0.0473 \pm 0.0290$ \\
      & SM
      & $\mathbf{0.9638 \pm 0.0013}$
      & $\mathbf{0.7721 \pm 0.0052}$
      & $\mathbf{0.7908 \pm 0.0056}$
      & $0.1266 \pm 0.0435$
      & $\underline{0.6841 \pm 0.0169}$ \\
      & \textbf{MM}
      & $\underline{0.9537 \pm 0.0158}$
      & $\underline{0.7432 \pm 0.0511}$
      & $\underline{0.7649 \pm 0.0462}$
      & $\underline{0.0928 \pm 0.0015}$
      & $\mathbf{0.6914 \pm 0.0423}$ \\
    \bottomrule
  \end{tabular}

  \parbox{0.9\textwidth}{
    \footnotesize
    WSS: Welfare-aware Static Selection;\,
    UA: Uniform Average;\,
    SM: Standard MoE;\,
    MM: MoE Market.
    Bold and underlined values denote the best and second-best results,
    respectively; rankings are determined from unrounded means.
  }
\end{table*}
\begin{figure*}[htbp]
    \centering
    \begin{subfigure}{0.245\linewidth}\includegraphics[width=\linewidth]{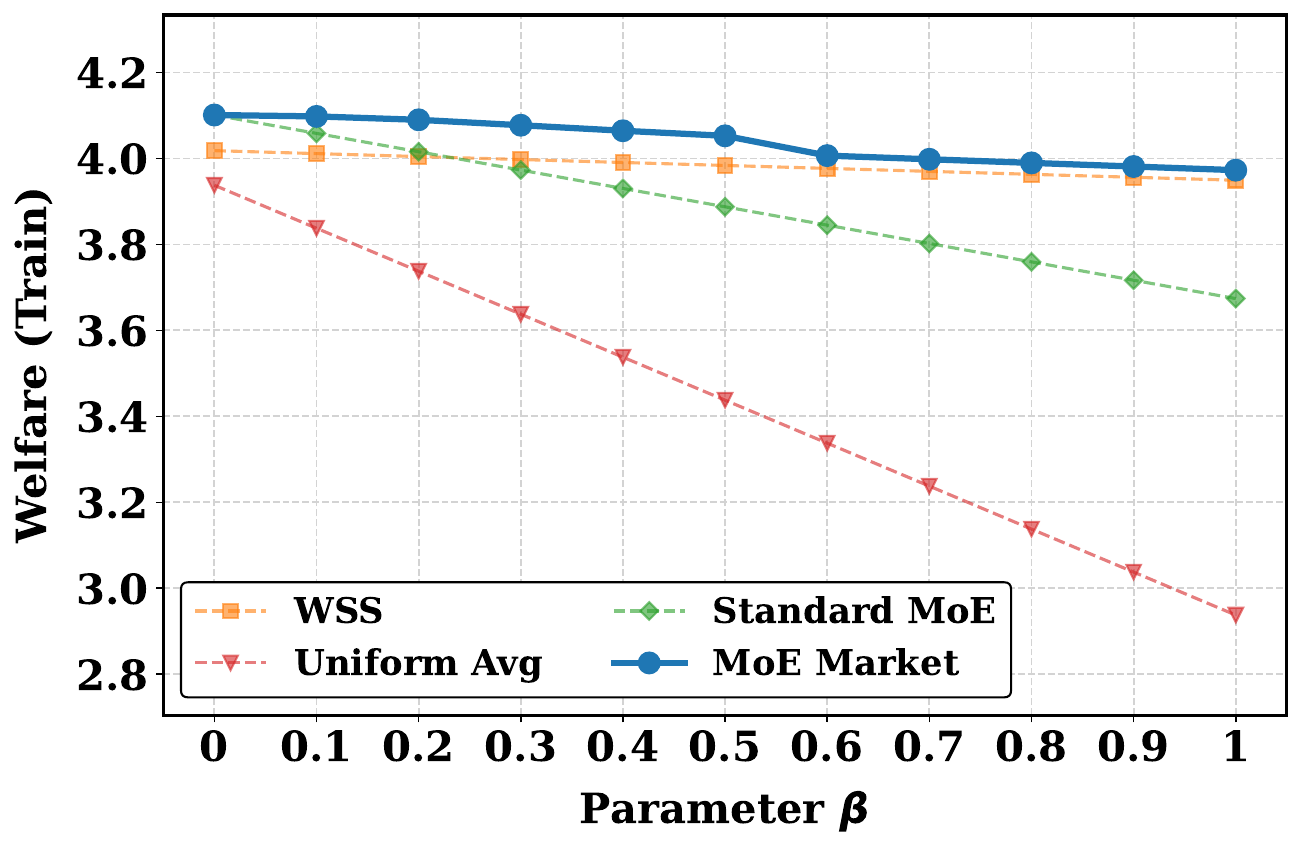}
    \end{subfigure}\hfill
    \begin{subfigure}{0.245\linewidth}\includegraphics[width=\linewidth]{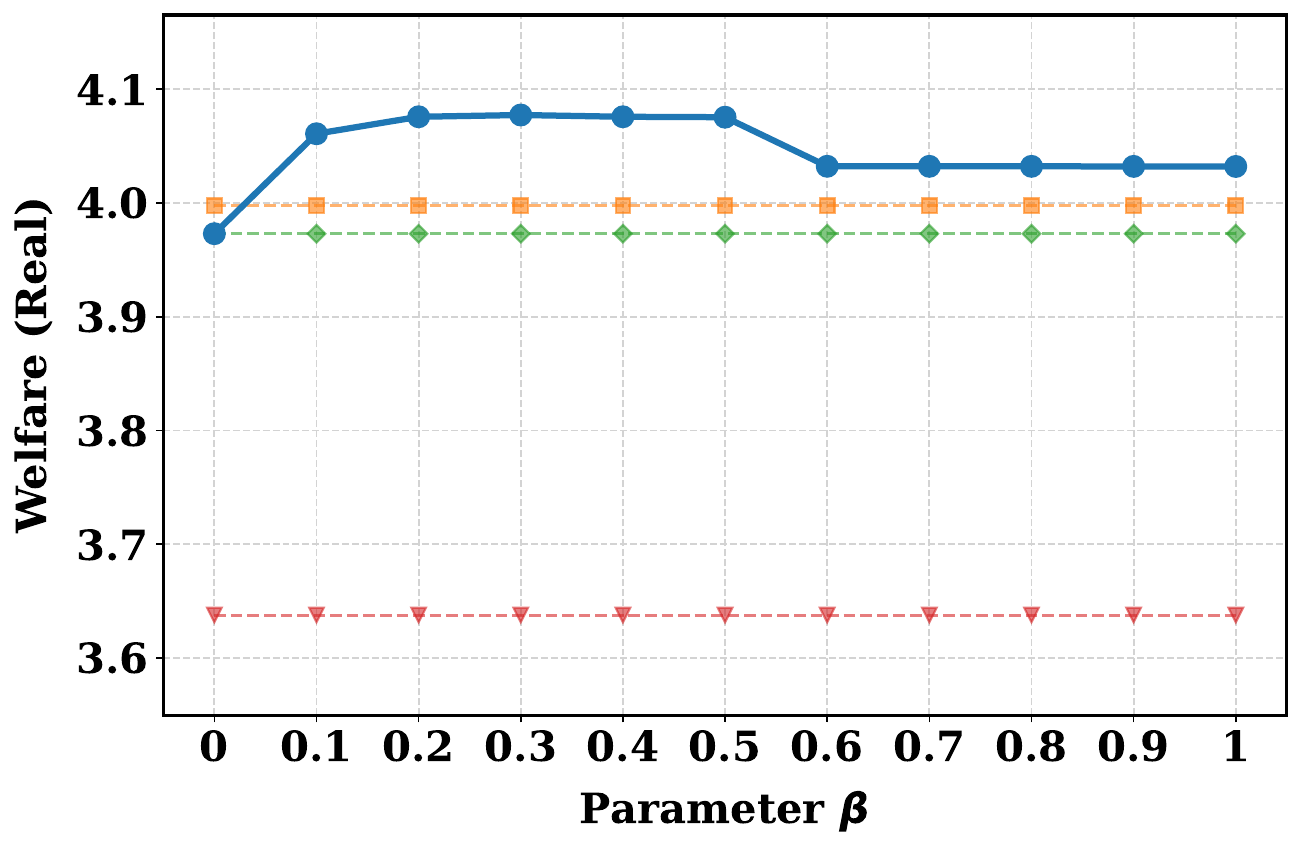}
    \end{subfigure}\hfill
    \begin{subfigure}{0.245\linewidth}\includegraphics[width=\linewidth]{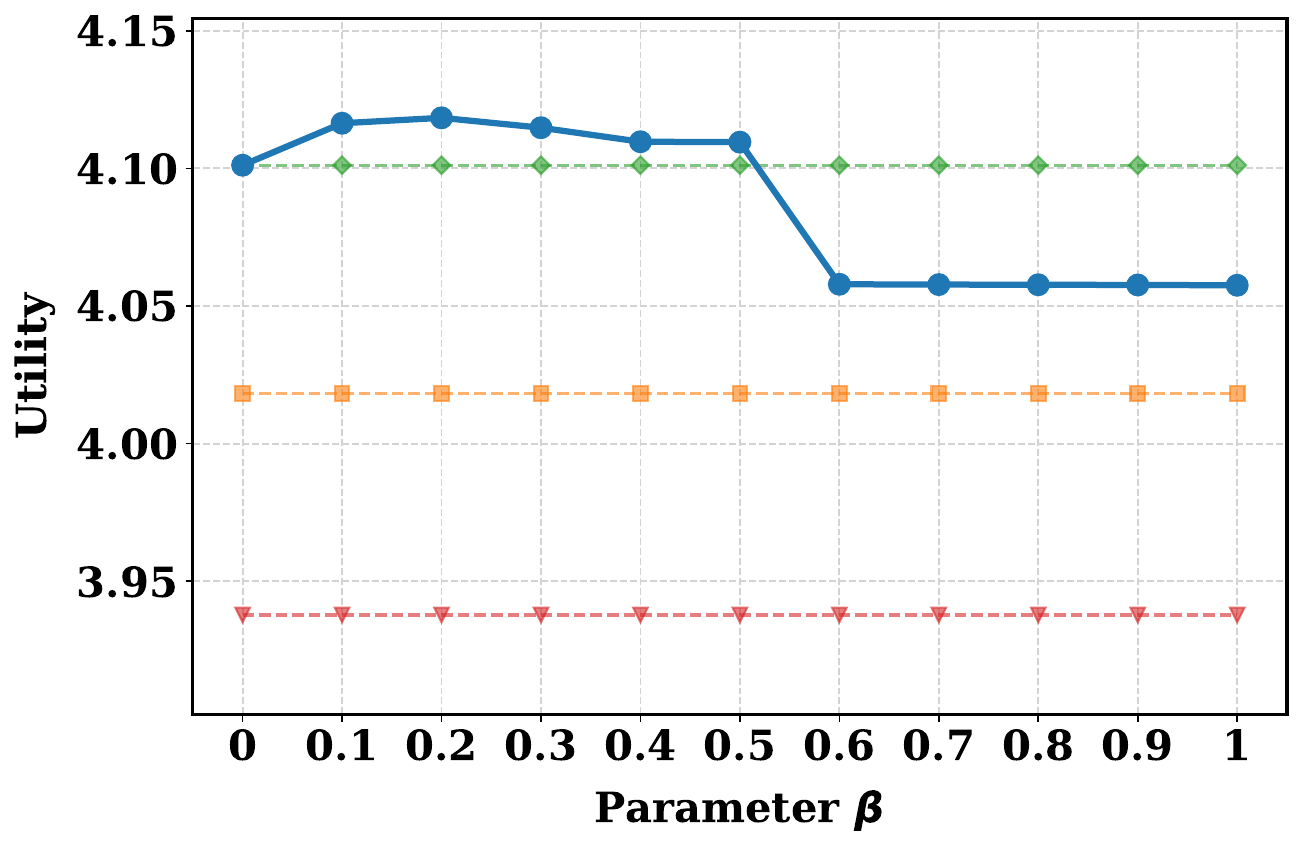}
    \end{subfigure}\hfill
    \begin{subfigure}{0.245\linewidth}\includegraphics[width=\linewidth]{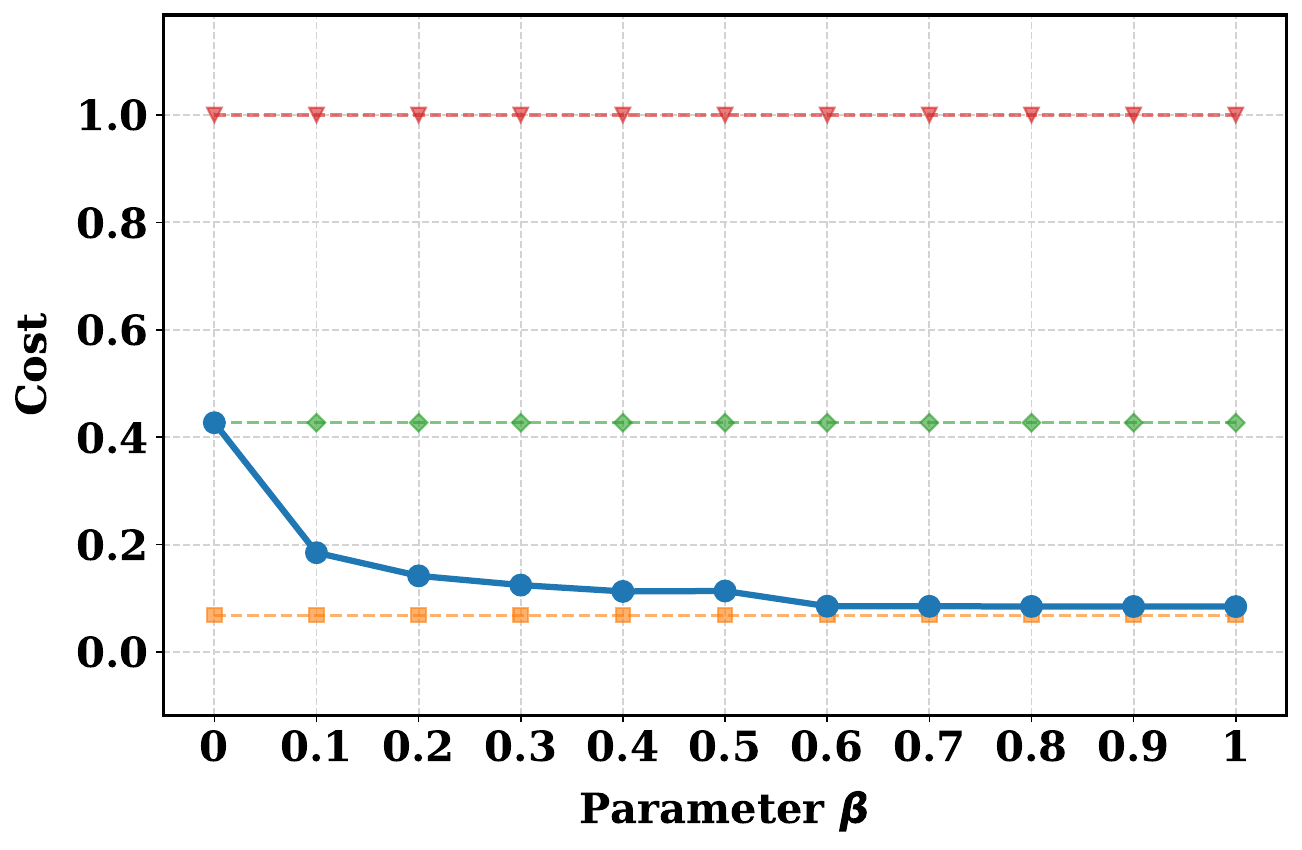}
    \end{subfigure}
    \\[2pt]
    \begin{subfigure}{0.245\linewidth}\includegraphics[width=\linewidth]{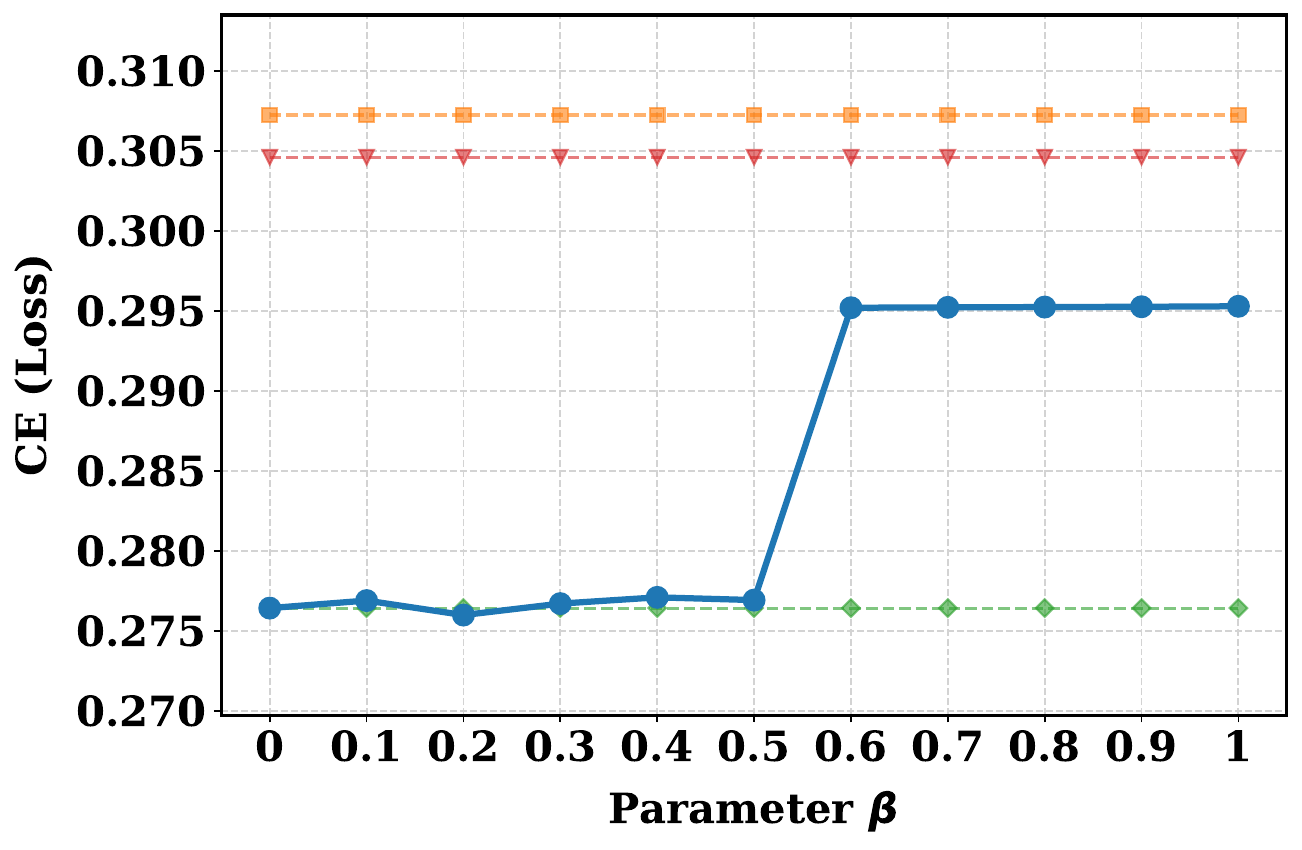}
    \end{subfigure}\hfill
    \begin{subfigure}{0.245\linewidth}\includegraphics[width=\linewidth]{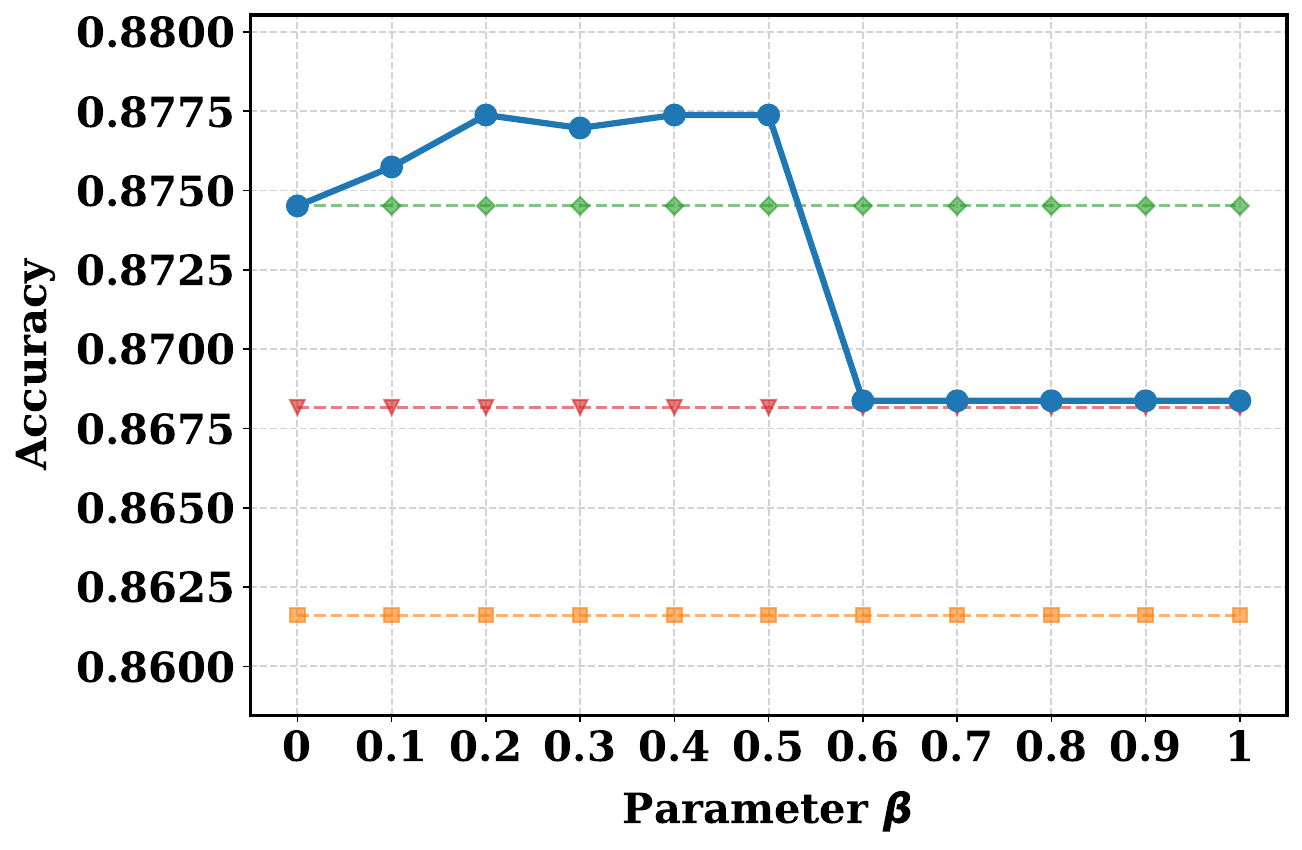}
    \end{subfigure}\hfill
    \begin{subfigure}{0.245\linewidth}\includegraphics[width=\linewidth]{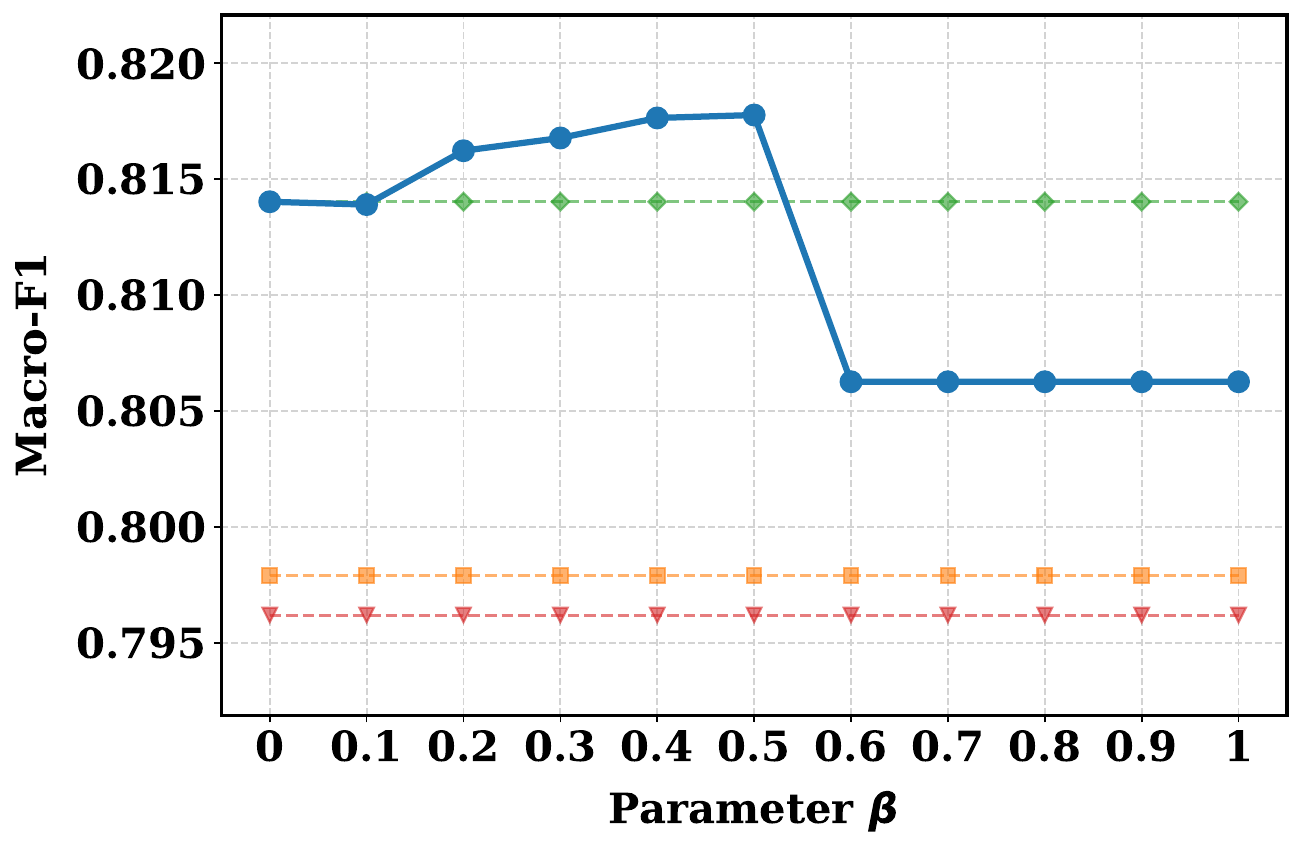}
    \end{subfigure}\hfill
    \begin{subfigure}{0.245\linewidth}\includegraphics[width=\linewidth]{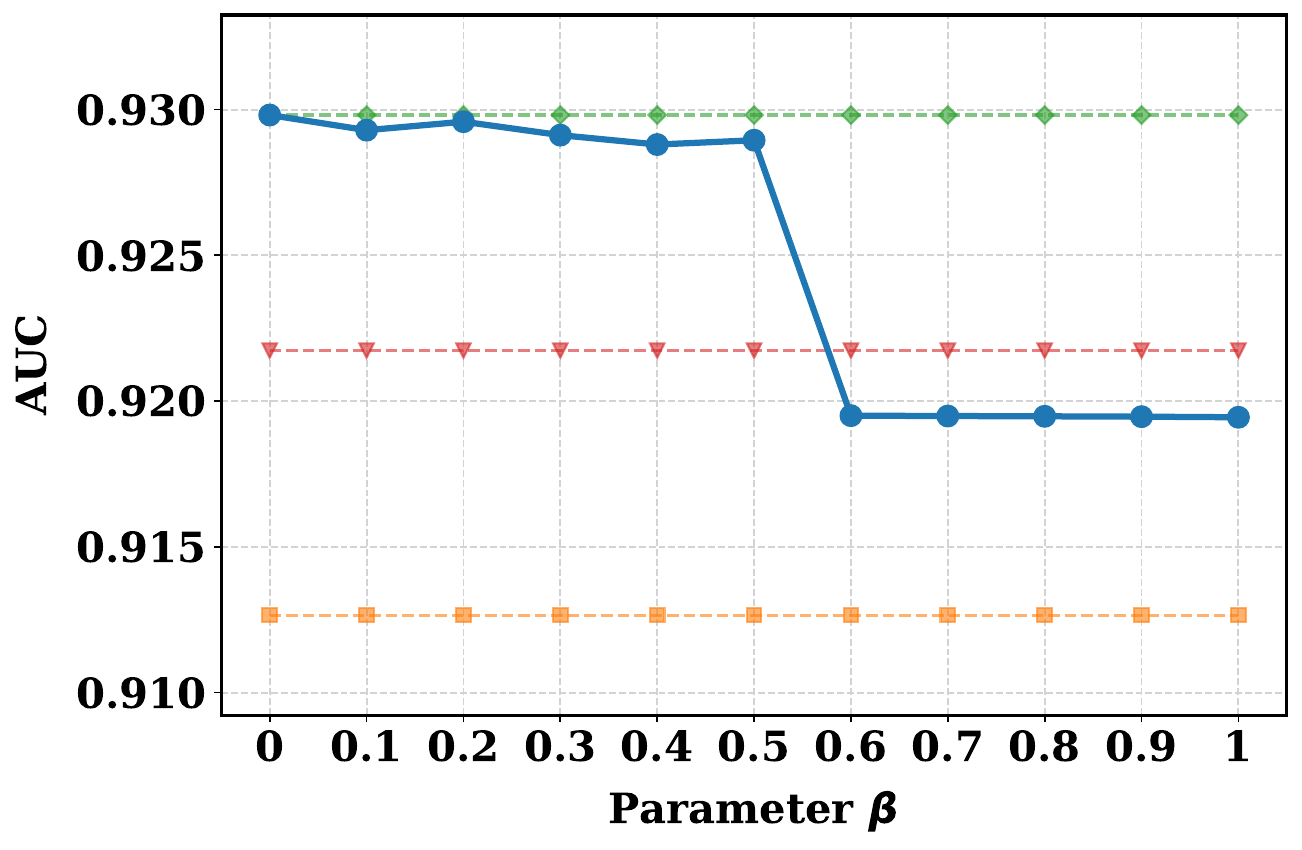}
    \end{subfigure}
    \caption{Sensitivity to the training cost coefficient $\beta$ on Adult Census. From left to right, the panels report training welfare, real welfare, utility, expected cost, cross-entropy loss, accuracy, Macro-F1, and AUC. The legend in the first panel applies to all panels.}
    \label{fig:beta_ablation}
\end{figure*}
Each benchmark dataset represents a separate buyer request, and no gate is shared across datasets. The gate is trained only on the request-specific training split, the validation split is used for model and baseline selection, and the test split is reserved for final evaluation. Thus, the experiments assess generalization to unseen instances from the same request distribution rather than zero-shot generalization to new tasks or datasets. All experiments are repeated using five random seeds. We report the mean and sample standard deviation of task-specific predictive metrics, expected service cost, and market welfare across these runs.

\subsection{Market Efficiency}
\label{market_effficiency}
Tables~\ref{tab:tabular_classification}, \ref{tab:tabular_regression}, and \ref{tab:image_classification} report the results for tabular classification,
tabular regression, and image classification. MoE Market achieves the
highest mean welfare on all 15 datasets and a lower mean expected cost
than Standard MoE in every case. Although Standard MoE is more accurate
on some datasets, MoE Market provides a more favorable balance between
predictive utility and execution cost. WSS incurs the lowest cost by
selecting a single expert, but MoE Market achieves higher welfare through
input-dependent coordination. Uniform Average generally performs worse
because it accounts for neither input-dependent expert capabilities nor
execution costs. Overall, cost-aware routing improves market efficiency
while retaining competitive predictive performance.

\subsection{Sensitivity Analysis of $\beta$}
\label{sensitivity of beta}

Figure~\ref{fig:beta_ablation} evaluates $\beta\in[0,1]$ in increments of $0.1$ using one representative Adult Census run. The independently fitted expert bank and measured cost profile remain fixed, while the MoE Market gate is retrained for each value of $\beta$. For a gate trained with coefficient $\beta$, Training Welfare is computed as $U_\beta-\beta C_\beta$. To compare all trained gates under the same market condition, we additionally report Real Welfare, defined as
\begin{equation}
{
\mathrm{Welfare}_{\mathrm{real}}(\beta)
=
U_\beta-\beta_{\mathrm{real}}C_\beta,
\qquad
\beta_{\mathrm{real}}=0.3.
}
\end{equation}

\begin{figure*}[htbp]
    \centering
    \begin{subfigure}{0.245\linewidth}\includegraphics[width=\linewidth]{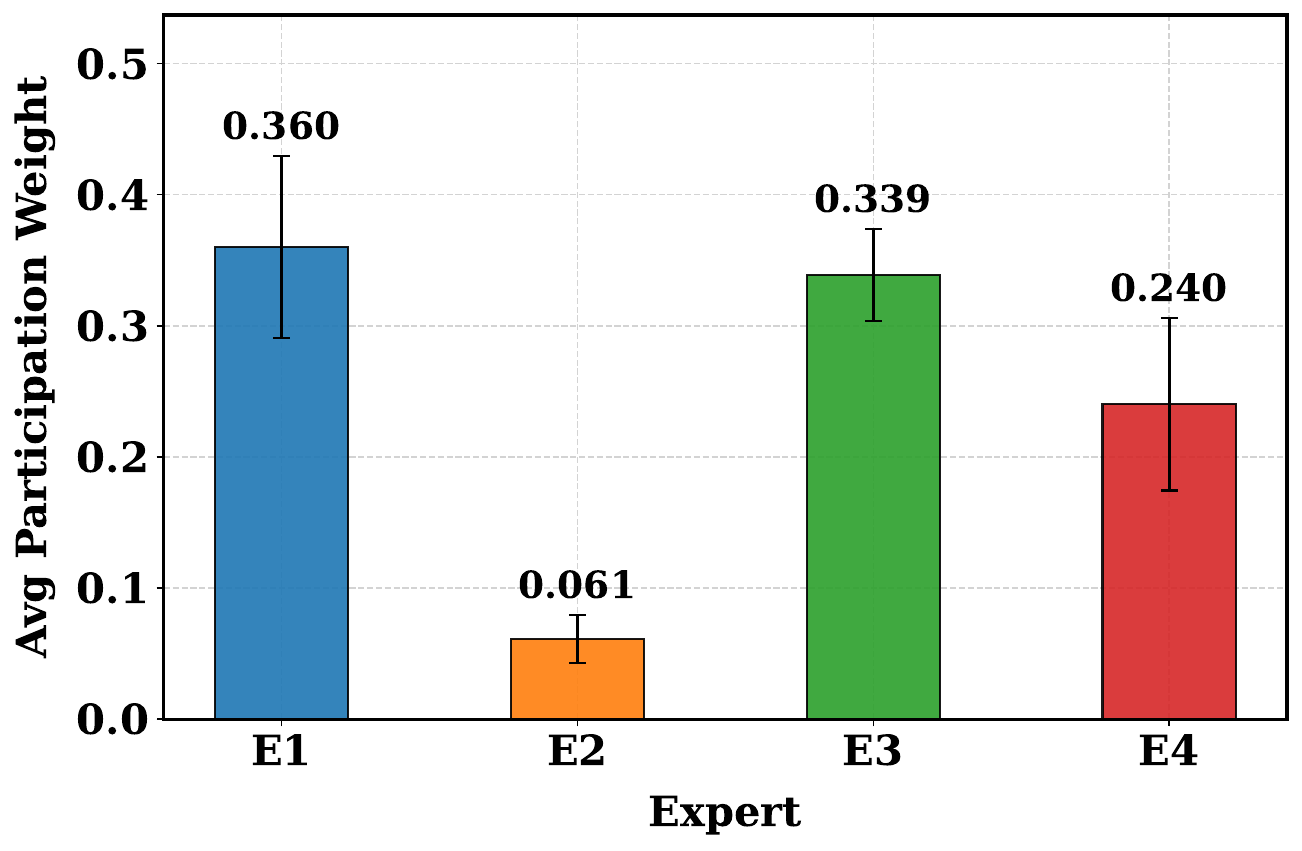}
    \end{subfigure}\hfill
    \begin{subfigure}{0.245\linewidth}\includegraphics[width=\linewidth]{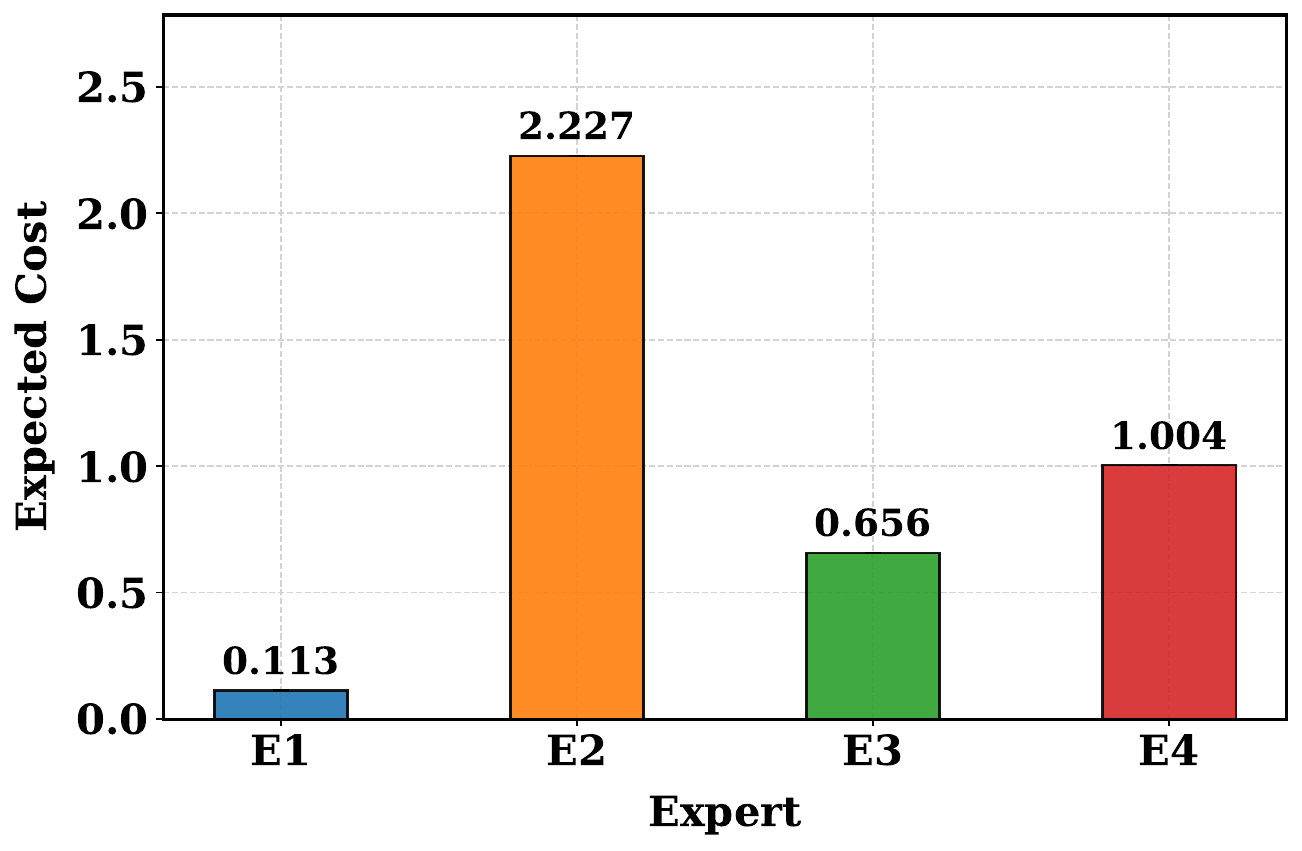}
    \end{subfigure}\hfill
    \begin{subfigure}{0.245\linewidth}\includegraphics[width=\linewidth]{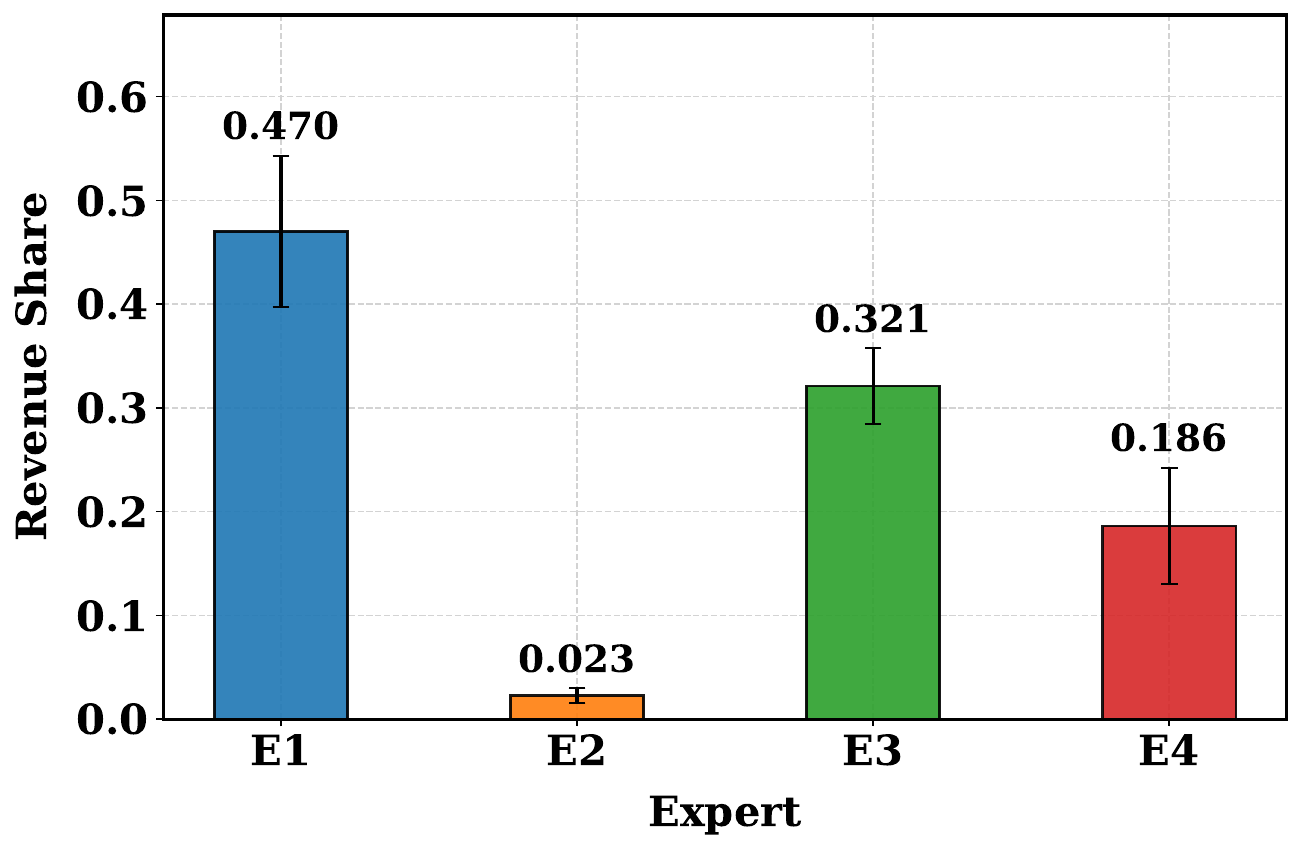}
    \end{subfigure}\hfill
    \begin{subfigure}{0.245\linewidth}\includegraphics[width=\linewidth]{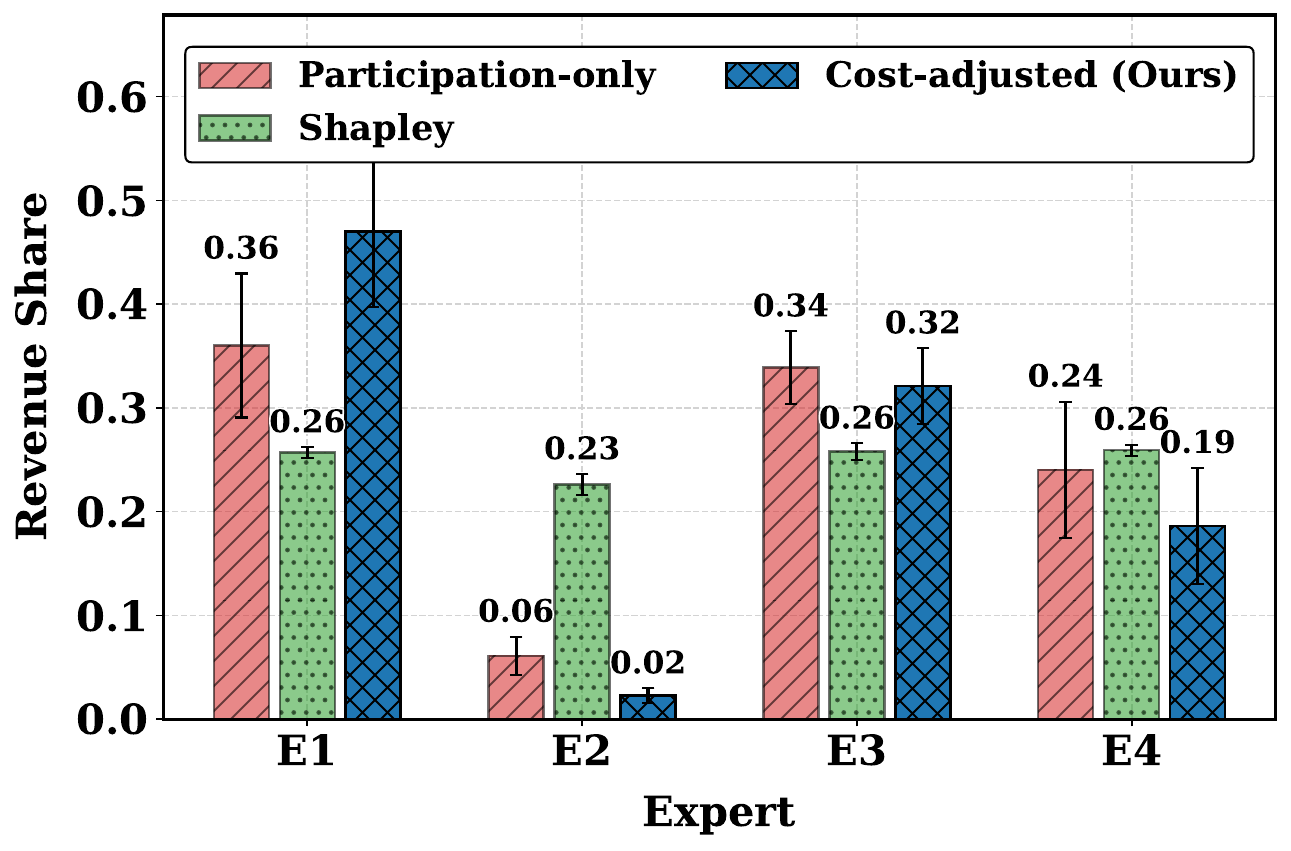}
    \end{subfigure}
    \caption{Revenue-allocation behavior on Breast Cancer (Case 1). From left to right, the panels show average participation weight, normalized expert unit cost, the proposed cost-adjusted revenue share at $\lambda=0.6$, and the comparison among participation-only, exact Shapley, and cost-adjusted allocation.}
    \label{fig:allocation1}
\end{figure*}

\begin{figure*}[htbp]
    \centering
    \begin{subfigure}{0.245\linewidth}\includegraphics[width=\linewidth]{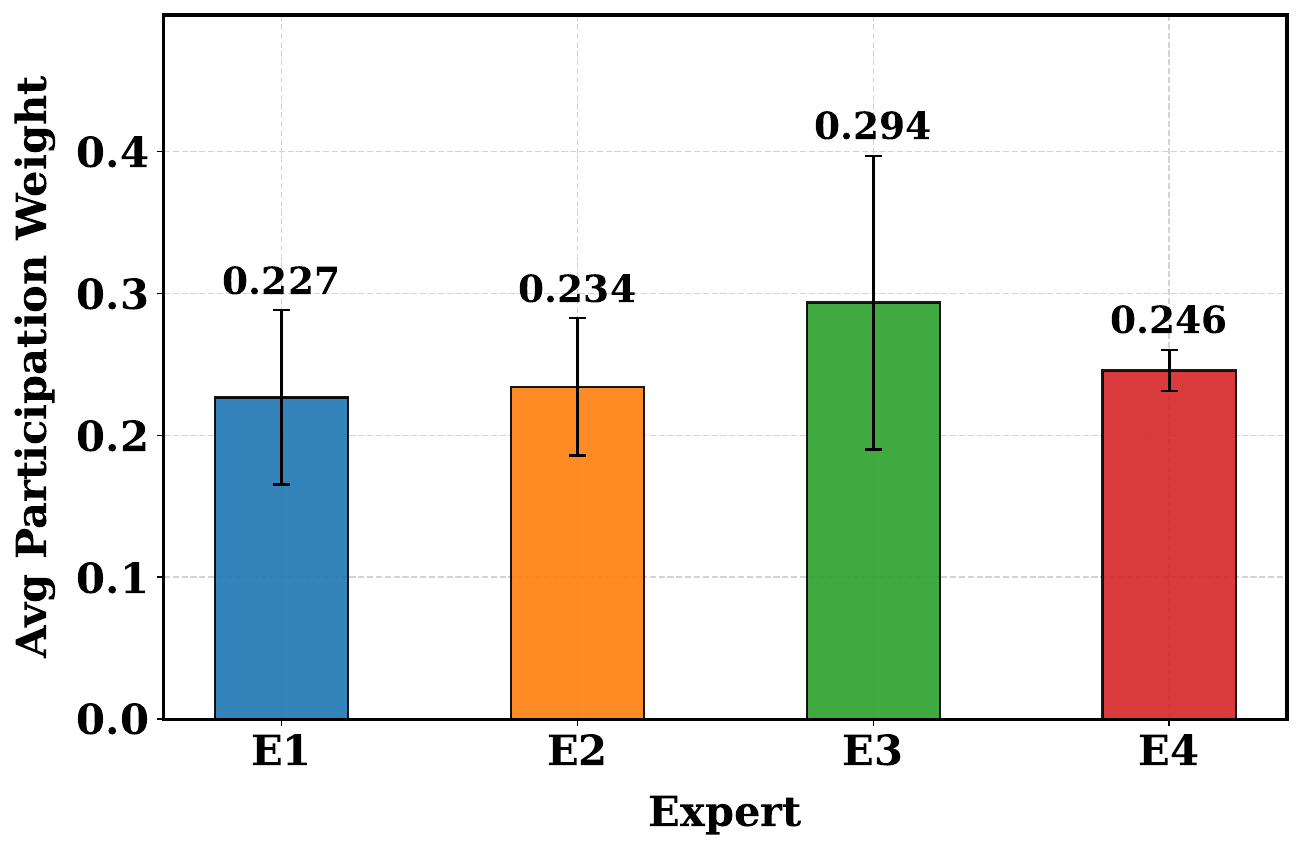}
    \end{subfigure}\hfill
    \begin{subfigure}{0.245\linewidth}\includegraphics[width=\linewidth]{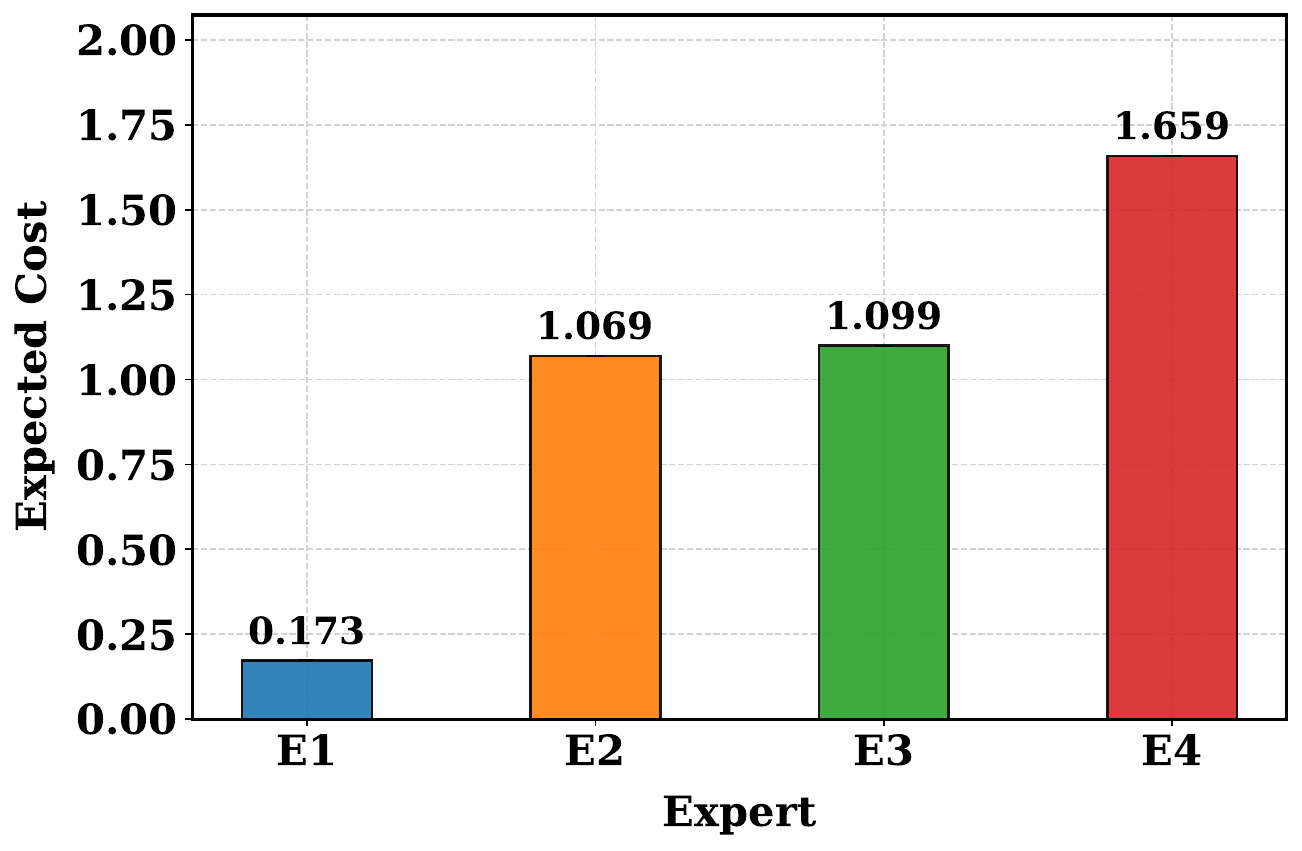}
    \end{subfigure}\hfill
    \begin{subfigure}{0.245\linewidth}\includegraphics[width=\linewidth]{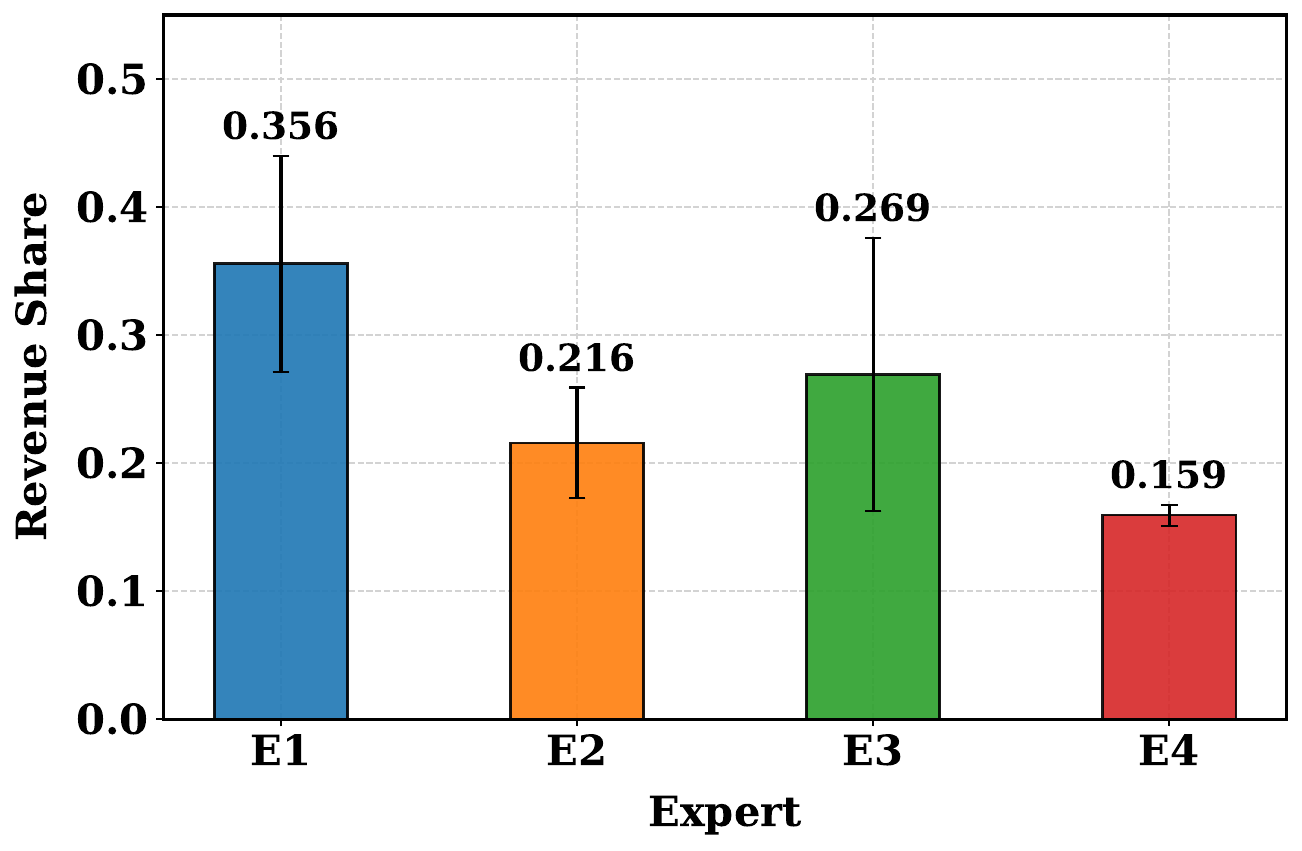}
    \end{subfigure}\hfill
    \begin{subfigure}{0.245\linewidth}\includegraphics[width=\linewidth]{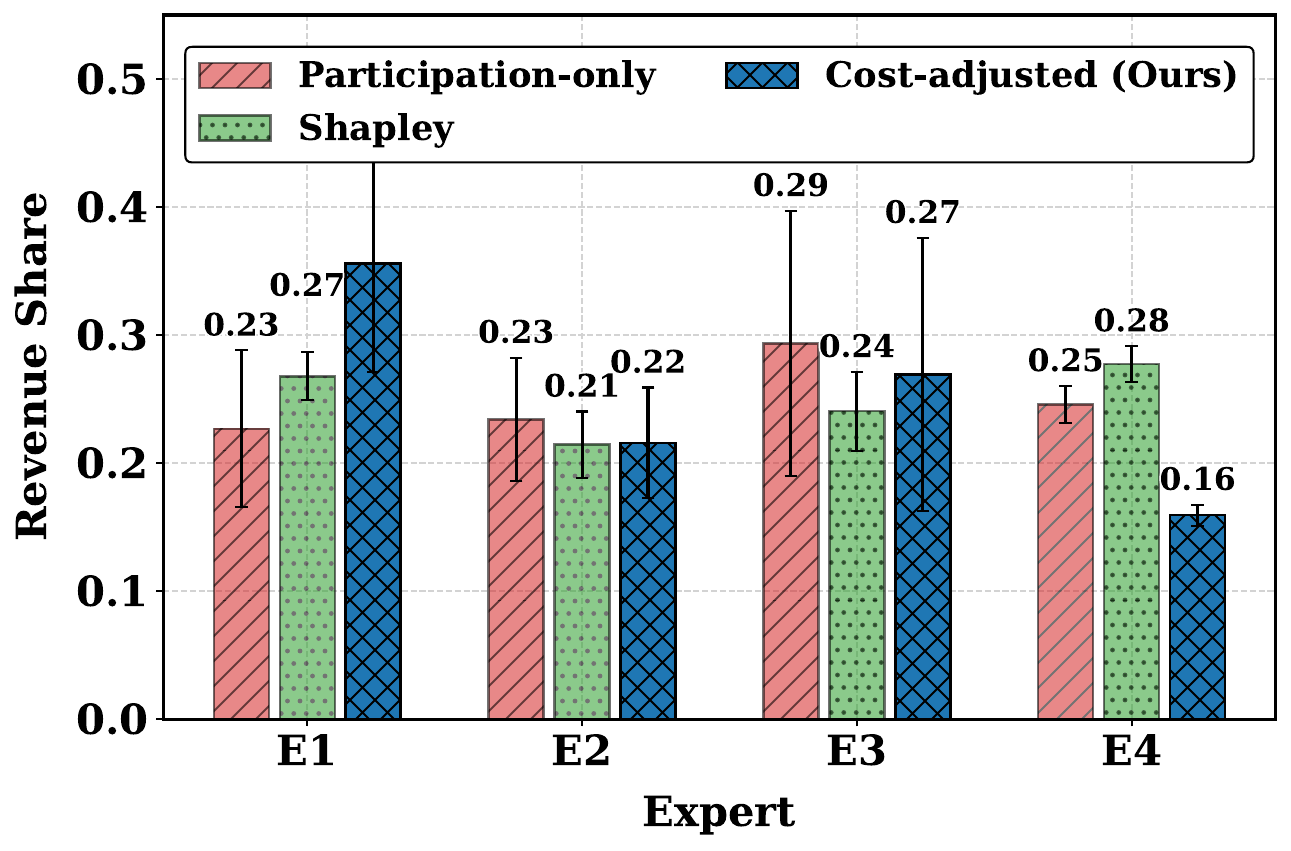}
    \end{subfigure}
    \caption{Revenue-allocation behavior on Diabetes (Case 2). From left to right, the panels show average participation weight, normalized expert unit cost, the proposed cost-adjusted revenue share at $\lambda=0.6$, and the comparison among participation-only, exact Shapley, and cost-adjusted allocation.}
    \label{fig:allocation2}
\end{figure*}

\begin{figure*}[htbp]
    \centering
    \begin{subfigure}{0.245\linewidth}\includegraphics[width=\linewidth]{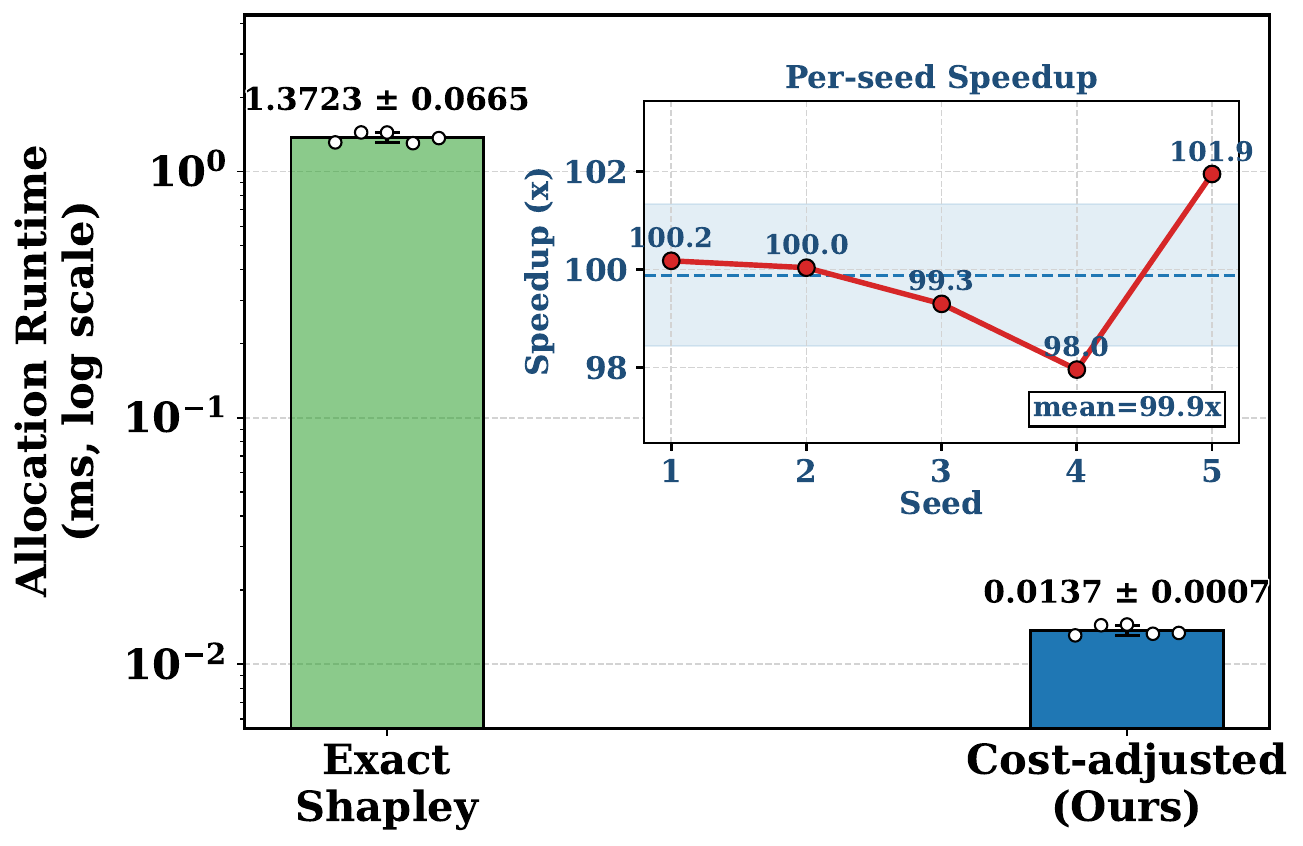}
    \end{subfigure}\hfill
    \begin{subfigure}{0.245\linewidth}\includegraphics[width=\linewidth]{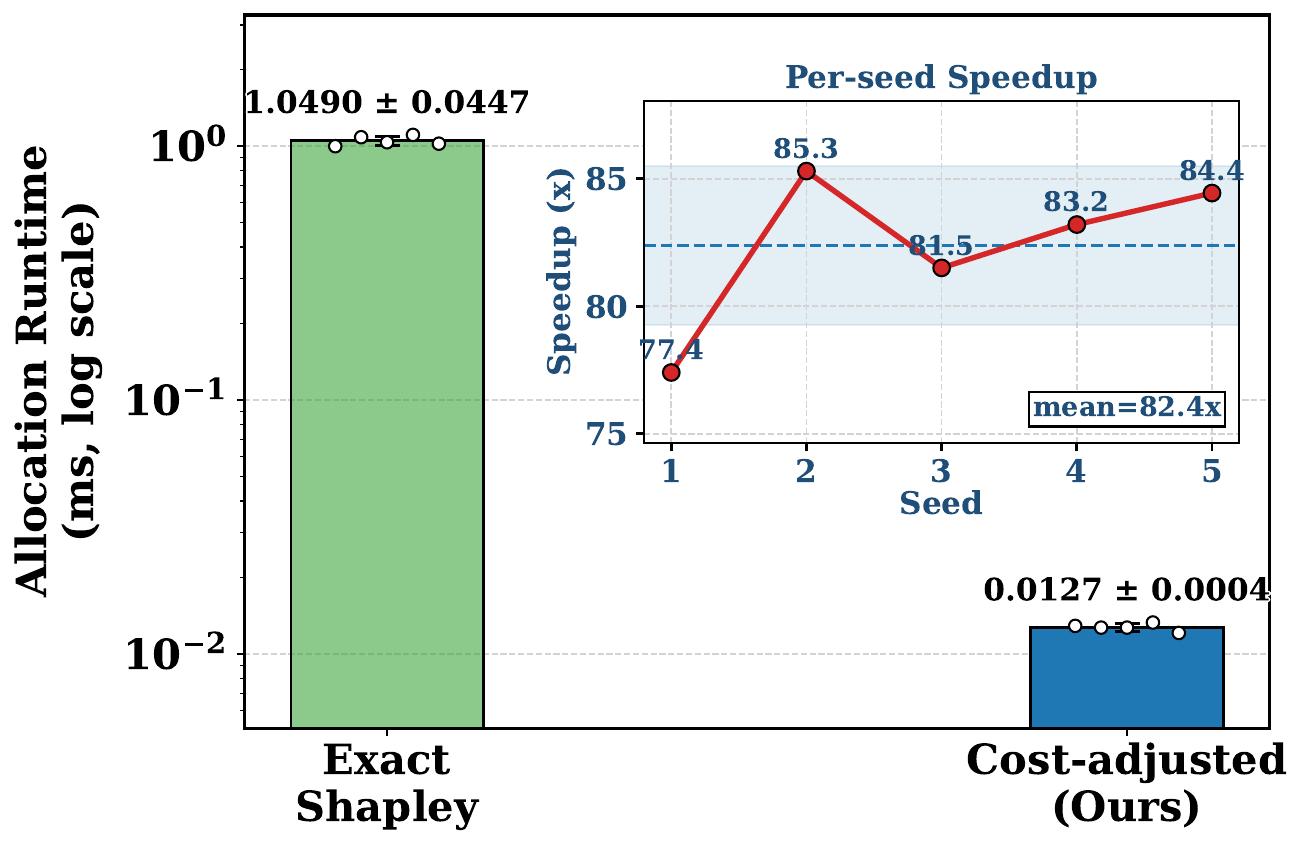}
    \end{subfigure}\hfill
    \begin{subfigure}{0.245\linewidth}\includegraphics[width=\linewidth]{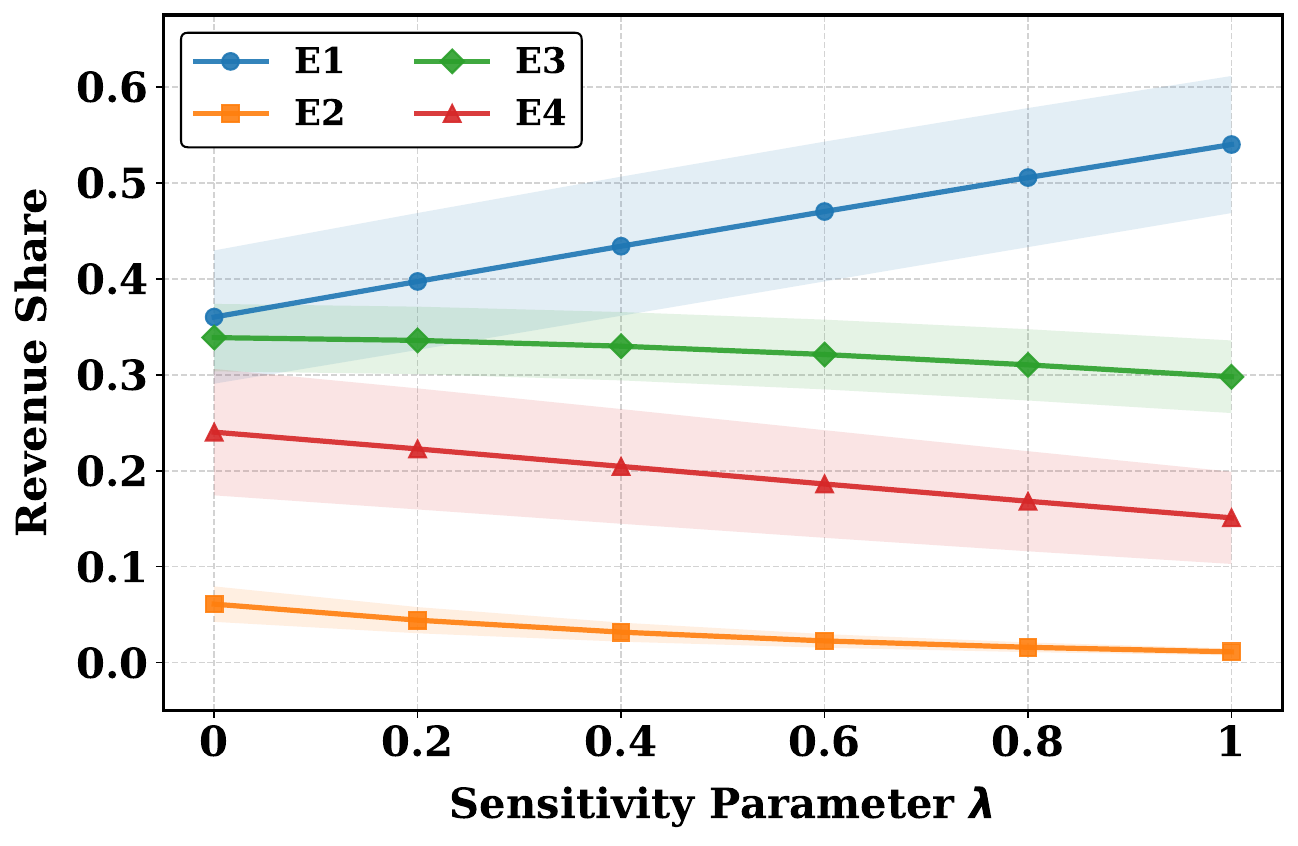}
    \end{subfigure}\hfill
    \begin{subfigure}{0.245\linewidth}\includegraphics[width=\linewidth]{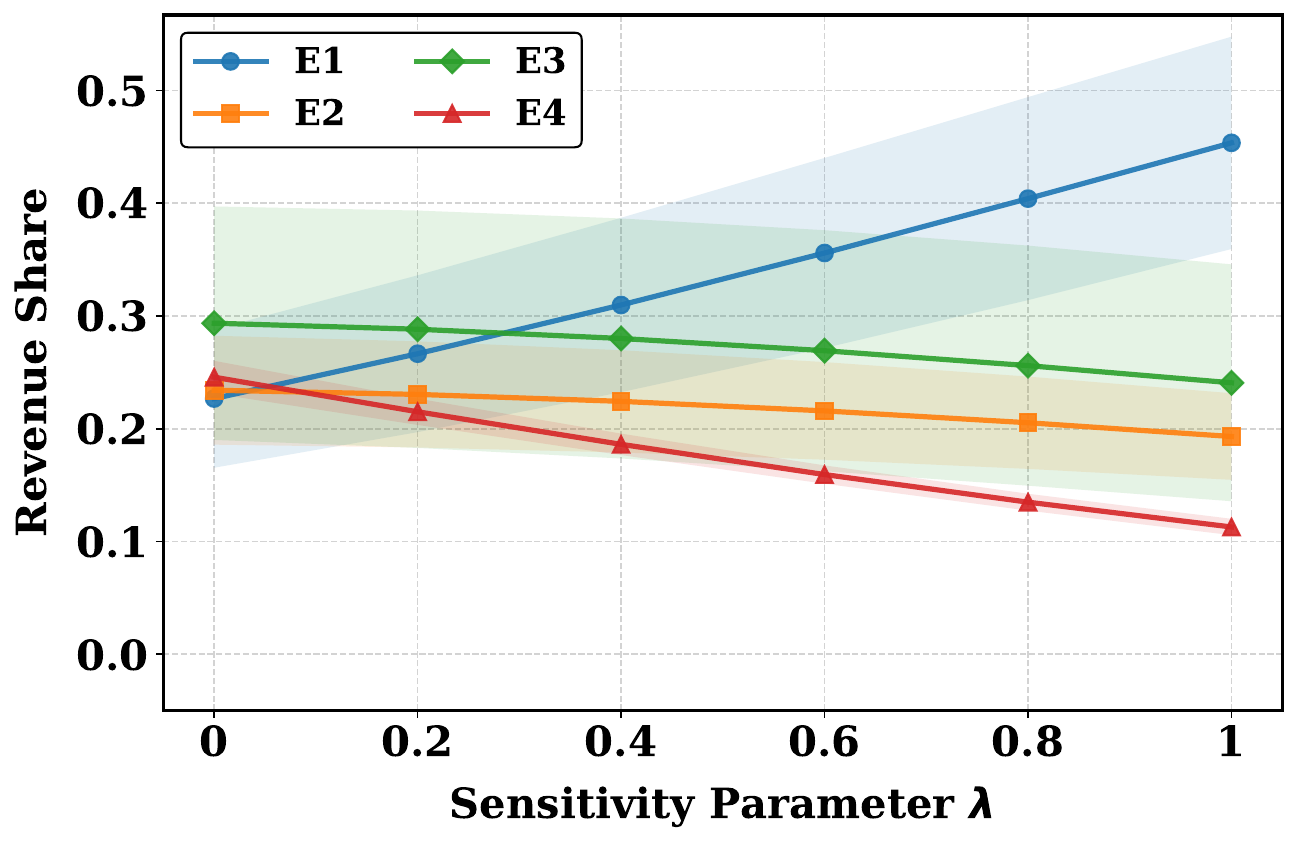}
    \end{subfigure}
    \caption{Runtime comparison and sensitivity to $\lambda$. From left to right, the panels report allocation runtime on Breast Cancer, allocation runtime on Diabetes, $\lambda$ sensitivity on Breast Cancer, and $\lambda$ sensitivity on Diabetes. Runtime bars show mean $\pm$ standard deviation over five seeds, and the inset panels report per-seed speedup over exact Shapley allocation.}
    \label{fig:allocation_runtime_lambda}
\end{figure*}

At $\beta=0$, MoE Market reduces to cost-unaware routing and closely matches Standard MoE. Increasing $\beta$ to $0.1$--$0.5$ reduces expected cost from approximately $0.43$ to $0.11$--$0.19$ with limited predictive degradation. Real Welfare peaks near $\beta=0.3$, where the training and deployment coefficients coincide. For $\beta\ge0.6$, further cost reductions no longer compensate for the decline in utility and predictive performance. Thus, $\beta$ controls the quality--cost trade-off, and alignment with the deployment condition avoids both cost-insensitive and excessively cost-oriented routing.

\subsection{Revenue Allocation}
\label{revenue_allocation}

We evaluate whether the proposed allocation rule reflects both realized expert participation and heterogeneous execution cost. We use Breast Cancer and Diabetes as two representative requests with different routing and cost patterns. All results are aggregated over five random seeds, and the default cost-discount coefficient is $\lambda=0.6$.

We compare the proposed rule with participation-only allocation and exact Shapley allocation. Participation-only allocation distributes revenue according to normalized average gating weights. Exact Shapley allocation follows the cooperative-game formulation of Shapley~\cite{shapley1953value}, which has also been used for contributor compensation in model marketplaces~\cite{liu2021dealer}. For each coalition $S\subseteq\mathcal{E}$, we define $v(S)$ as the predictive utility obtained by retaining only the experts in $S$ and renormalizing their original gating weights, with $v(\emptyset)=0$. We enumerate all $2^{|\mathcal{E}|}$ coalitions to obtain exact, rather than sampled, Shapley values. Negative values are clipped to zero and the remaining values are normalized to sum to one before allocating the residual revenue pool. Thus, this baseline captures marginal predictive contribution but does not explicitly account for measured execution cost.

Figures~\ref{fig:allocation1} and \ref{fig:allocation2} demonstrate the intended cost-sensitive adjustment. In Case~1, $E_1$ and $E_3$ have similar participation, but the lower-cost $E_1$ receives a larger share, while the costly and infrequently used $E_2$ receives little revenue. In Case~2, allocation similarly shifts toward the lower-cost $E_1$ and away from the higher-cost $E_4$. Unlike exact Shapley, which reflects marginal predictive utility, the proposed rule jointly accounts for realized routing and measured execution cost. The last two panels of Figure~\ref{fig:allocation_runtime_lambda} further show that increasing $\lambda$ consistently shifts revenue toward lower-cost experts; $\lambda=0.6$ provides a visible adjustment without concentrating all revenue on one expert.

The first two panels of Figure~6 show mean speedups of approximately $99.9\times$ on Breast Cancer and $82.4\times$ on Diabetes. Once participation statistics are available, the proposed rule is linear in the number of experts, whereas exact Shapley requires coalition enumeration.

\section{Conclusion}

This paper presents an MoE-based model market that coordinates heterogeneous experts as a market-level coordination mechanism for collaborative AI service delivery. 

We formalize the market participants, workflow, cost structure, and welfare objective, and develop a cost-aware gating mechanism that balances predictive utility against measured execution cost. We further introduce a cost-adjusted revenue allocation rule based on realized expert participation and cost, and establish its budget balance, participation monotonicity, and cost sensitivity. We evaluate the framework using independently trained and frozen MLP, Transformer, XGBoost, and CatBoost experts with latency-based execution costs. Five-seed experiments on 15 tabular and image datasets show that MoE Market achieves the highest mean welfare on every dataset and a lower mean expected cost than Standard MoE in every case, while retaining competitive predictive performance. Comparisons with Welfare-aware Static Selection demonstrate the benefit of input-dependent coordination over static single-model selection. The allocation experiments further show that the proposed rule produces cost-sensitive revenue shares and is substantially faster to compute than exact Shapley allocation.

Future work will focus on dynamic model markets in which expert availability, execution costs, and deployment distributions evolve over time.

\begin{acks}
This work was partially funded by the Horizon Europe projects DATAPACT (No. 101189771) and RAISE Suite (No. 101188337).
\end{acks}

\section*{Artifacts}
The artifacts for reproducing our experiments are available at: \url{https://github.com/MAYIZHOU/artifact-code_moe.git}.
The repository includes source code, experiment scripts, and reproduction instructions.


\balance
\bibliographystyle{ACM-Reference-Format}
\bibliography{ref}


\end{document}